\newtheorem{thm}{Theorem}[section]
\newtheorem{lem}[thm]{Lemma}
\theoremstyle{definition}
\theoremstyle{remark}
\newtheorem{rem}[thm]{Remark}
\numberwithin{equation}{section}
\newcommand{\abs}[1]{\left\vert#1\right\vert}
\newcommand{\set}[1]{\left\{#1\right\}}
\newcommand{\Real}{\mathbb R}
\newcommand{\Natural}{\mathbb N}
\newcommand{\B}{\mathcal{B}}
\newcommand{\such}{\, | \,}
\newcommand{\limn}{\lim_{n \to \infty}}
\newcommand{\limk}{\lim_{k \to \infty}}
\newcommand{\dfn}{\, := \,}
\newcommand{\prob}{\mathbb{P}}
\newcommand{\proby}{\prob^Y}
\newcommand{\tprob}{\widetilde{\mathbb{P}}}
\newcommand{\Exp}{\mathcal E}
\newcommand{\qprob}{\mathbb{Q}}
\newcommand{\qprobb}{\mathsf{Q}}
\newcommand{\expec}{\mathbb{E}}
\newcommand{\F}{\mathcal{F}}
\newcommand{\ud}{\mathrm d}
\newcommand{\udw}{\mathrm d}
\newcommand{\X}{\mathcal{X}}
\newcommand{\rel}{\mathsf{rel}}
\newcommand{\rely}{\rel^Y}
\newcommand{\normTV}[1]{\big| #1 \big|_{\mathsf{TV}}}
\newcommand{\p}{\mathrm{p}}
\newcommand{\tX}{\widetilde{X}}
\newcommand{\hX}{\widehat{X}}
\newcommand{\tp}{\widetilde{p}}
\newcommand{\tvartheta}{\widetilde{\vartheta}}
\newcommand{\co}{\mathsf{c}}
\newcommand{\g}{\mathsf{g}}
\newcommand{\supp}{\mathsf{supp}}
\newcommand{\cosupp}{\mathsf{conv.supp}}
\newcommand{\gy}{\g^Y}
\newcommand{\ngo}{\nabla \g}
\newcommand{\ngoy}{\ngo^Y}
\newcommand{\pare}[1]{\left(#1\right)}
\newcommand{\bra}[1]{\left[#1\right]}
\newcommand{\dbra}[1]{[\kern-0.15em[ #1 ]\kern-0.15em]}
\newcommand{\dbraco}[1]{[\kern-0.15em[ #1 [\kern-0.15em[}
\newcommand{\indic}{\mathbb{I}}
\newcommand{\ELMD}{\emph{ELMD}}
\newcommand{\absco}{{<\kern-0.53em<}}
\newcommand{\NAone}{\emph{NA}$_1$}
\begin{document}

\title[Market viability via absence of arbitrage of the first kind]{Market viability via absence of arbitrage of the first kind}%
\author{Constantinos Kardaras}%
\address{Constantinos Kardaras, Mathematics and Statistics Department, Boston University, 111 Cummington Street, Boston, MA 02215, USA.}%
\email{kardaras@bu.edu}%

\thanks{The author would like to thank Yuri Kabanov for fruitful conversations that significantly helped in formulating and proving the results of this paper. Two anonymous referees provided invaluable help in the presentation of the paper. Partial support by the National Science Foundation, grant number DMS-0908461, is gratefully acknowledged.}%
%\subjclass[2000]{60H99, 60G44, 91B28, 91B70}
\keywords{Arbitrage of the first kind, cheap thrills, fundamental theorem of asset pricing, equivalent local martingale deflators, semimartingales, predictable characteristics.}%
\subjclass[2000]{60G44, 60H99, 91B28, 91B70}
\date{\today}%
%\dedicatory{}%
%\commby{}%
% ----------------------------------------------------------------
\begin{abstract}
In a semimartingale financial market model, it is shown that there is equivalence between absence of \textsl{arbitrage of the first kind} (a weak viability condition) and the existence of a strictly positive process that acts as a local martingale deflator on nonnegative wealth processes.
\end{abstract}

\maketitle

% ----------------------------------------------------------------
\setcounter{section}{-1}

\section{Introduction}

A ubiquitous market assumption in the literature of Stochastic Finance theory is postulating the existence of Equivalent Local Martingale Measure (ELMM). The latter refers to a probability measure $\qprob$, equivalent to the ``real-world'' probability $\prob$, with the property that all discounted nonnegative wealth processes are local $\qprob$-martingales. In view of the Fundamental Theorem of Asset Pricing (FTAP), it is quite clear why such assumption is made from the outset: existence of an ELMM is intimately connected to market viability; in fact, it is equivalent to the economically-sound ``No Free Lunch with Vanishing Risk'' (NFLVR) condition --- see for example \cite{MR1304434} and \cite{MR1671792} for a complete treatment on the topic.

\smallskip

Stipulating the existence of an ELMM seems unavoidable in order to maintain market viability. However, in recent publications there has been considerable interest in models where an ELMM might fail to exist. These have appeared, for instance:
\begin{itemize}
  \item in the context of stochastic portfolio theory, for which the survey \cite{FerKar07} is a good introduction;
  \item from the financial modeling perspective, an example of which is the \textsl{benchmark approach} of \cite{MR2267213};
  \item in a financial equilibrium setting, both for infinite-time horizon settings (see \cite{RePEc:ier:iecrev:v:33:y:1992:i:2:p:323-39}),
as well as finite-time horizon models with credit constraints on economic agents (see \cite{MR1774056} and \cite{MR1748373}).
\end{itemize}
The common assumption that the previous approaches share is postulating the existence of an Equivalent Local Martingale Deflator (ELMD), that is, a strictly positive process that makes all discounted nonnegative wealth processes, when multiplied by it, local martingales. (An ELMD was called a \textsl{strict martingale density} in \cite{MR1353193}; we opt here to call it ELMD as it immediately connects with the notion of an ELMM.) An ELMD is a strictly positive local martingale, but not necessarily a martingale; therefore, it cannot always be used as a density processes to produce an ELMM. 

\smallskip

While models where an ELMM might fail to exist are now being extensively studied, a result that would justify their applicability along the lines the FTAP has not yet appeared in the literature. %Such absence of a theoretical foundation could be partially responsible for the lack of enthusiasm in accepting these models.
In this work, the aforementioned issue is tackled. A precise economical condition of market viability is given using the concept \textsl{arbitrage of the first kind}, which has first appeared under this appellation in \cite{Ing87}; see also \cite{MR1348197} in the context of large financial markets, as well as \cite{MR1774056}, where arbitrage of the first kind is called a \textsl{cheap thrill}. Absence of arbitrage of the first kind in the market, which we shall abbreviate as condition NA$_1$, is close in spirit, but strictly weaker, than condition NFLVR; in fact, it is exactly equivalent to condition ``No Unbounded Profit with Bounded Risk'' (NUPBR) that appeared in \cite{MR2335830}. The main result of the present paper precisely states that in a semimartingale market model there is  equivalence between condition NA$_1$ and the existence of an ELMD.

%\smallskip
%
%The starting point of assuming condition NA$_1$ allows for more financial modeling freedom, as it expands the ``allowable'' class of models under the classical theory. Not requiring the existence of an ELMM has a number of advantages, both from practical and theoretical considerations. There has already been quite some discussion on this topic already. In order not to be repetitive, no attempt shall be made here to justify the usefulness of such approach; we rather refer the interested reader to \cite{FerKar07}, \cite{MR2335830}, \cite{KarPla07} and the references therein. The author's hope is that the present work will go one step further in popularizing models where ``free snacks'' in the terminology of \cite{MR1774056} might exist, by providing a justification that parallels the FTAP. Needless to say, the appropriateness of choosing this perspective as an alternative to the classical modeling assumption of existence of an ELMM depends on the problem-in-hand.
%
%\cite{MR1660801}

\smallskip

In the literature concerning discrete-time models, there have appeared two ways of providing a proof of the FTAP. The first one is the approach of \cite{MR1041035} (initiated in \cite{MR540823}), which utilizes convex separation functional-analytic arguments. The alternative, presented in \cite{MR1380761}, uses the economic idea that the marginal utility evaluated at the optimal terminal wealth of an economic agent, when properly scaled, defines the density of an equivalent martingale measure. The former approach has been adapted with extreme success to continuous time models in \cite{MR1304434} and \cite{MR1671792}. The present work can be seen as a counterpart of the latter approach in continuous-time markets --- here, the utility involved is logarithmic (under a suitable change of probability), and makes the reciprocal of the log-optimizer an ELMD. Interestingly enough, in continuous-time models the two approaches do not give rise to the same result; the present approach weakens the equivalent conditions of the  classical FTAP in \cite{MR1304434}, both from the mathematical and the economic side. Note also that the main result of this paper can also be seen as an intermediate step in proving the general version of the FTAP as is presented in \cite{MR1304434}. In fact, this task is taken up in \cite{Kar_09_fin_add_ftap}.

\medskip

The structure of the paper is simple. In Section \ref{sec: weak version of FTAP}, the market is introduced, arbitrage of the first kind is defined and the main result is stated, whose somewhat lengthy and technical proof is deferred for Section \ref{sec: proof}.

\section{Absence of Arbitrage of the First Kind and Equivalent Local Martingale Deflators} \label{sec: weak version of FTAP}

\subsection{Probabilistic remarks}
All stochastic processes in the sequel are defined on a \textsl{filtered probability space} $\left(\Omega, \, \F, \, (\F(t))_{t \in \Real_+}, \, \prob\right)$. Here, $\prob$ is a probability on $(\Omega, \F)$, $\F$ being a sigma-algebra that will make all random elements measurable. All relationships between random variables are understood in the $\prob$-a.s. sense. The filtration $(\F(t))_{t \in \Real_+}$ is right-continuous. We assume the existence of a finite financial planning horizon $T$, where $T$ is a \emph{finite} stopping time. All processes will be assumed to be constant, and equal to their value they have at $T$, after time $T$. Without affecting the generality of the discussion, it will be assumed throughout that $\F(0)$ is trivial modulo $\prob$ and that $\F(T) = \F$.

\subsection{Investment}

Let $S$ be a \emph{semimartingale}, denoting the \emph{discounted}, with respect to some baseline security, price process of a financial security. Starting with capital $x \in \Real$, and investing according to some predictable and $S$-integrable strategy $\vartheta$, an economic agent's discounted wealth process is
\begin{equation} \label{eq: wealth process, all}
X^{x, \vartheta} \dfn x + \int_0^\cdot \vartheta(t) \ud S(t).
\end{equation}
When modeling frictionless trading, credit constraints have to be imposed on investment in order to avoid \emph{doubling strategies}. Define then $\X$ to be the set of all nonnegative wealth processes, i.e., all $X^{x, \vartheta}$ in the notation of \eqref{eq: wealth process, all} such that $X^{x, \vartheta} \geq 0$.

\subsection{Equivalent local martingale deflators}

An \textsl{equivalent local martingale deflator} (ELMD) is a nonnegative process $Z$ with $Z(0) = 1$ and $Z(T) > 0$, such that $Z X$ is a local martingale for all $X \in \X$. Since $1 \equiv X^{1, 0} \in \X$, an ELMD is in particular a strictly positive local martingale.

\subsection{Arbitrage of the first kind}

An $\F(T)$-measurable random variable $\xi$ will be called an \textsl{arbitrage of the first kind} if $\prob[\xi \geq 0] = 1$, $\prob[\xi > 0] > 0$, and \emph{for all $x > 0$ there exists $X^{x, \vartheta} \in \X$ (for some $\vartheta$ which may depend on $x$), such that $X^{x, \vartheta} (T) \geq \xi$}. If there exists no arbitrage of the first kind in the market, we shall say that condition NA$_1$  holds.

It is straightforward to see that condition NA$_1$ is weaker than condition NFLVR of \cite{MR1304434}. Actually, using a combination of Lemma A.1 in \cite{MR1304434} and Lemma 2.3 in \cite{MR1768009}, it is shown in \cite[Proposition 1.2]{Kar_09_fin_add_ftap} that condition NA$_1$ is equivalent to the requirement that the set $\{X (T) \such  X \in \X \text{ with } X_0 = 1\}$ is bounded in probability. The latter condition has been coined BK in \cite{MR1647282} and NUPBR in \cite{MR2335830}.

\subsection{The main result} The next result can be seen as a weak version of the FTAP. Though simple to state, its proof is quite technical and is given in Section \ref{sec: proof}.

\begin{thm} \label{thm: main}
Condition \NAone \ is equivalent to the existence of at least one \ELMD.
\end{thm}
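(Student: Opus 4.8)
The direction ``ELMD $\Rightarrow$ NA$_1$'' is the easy one, so I would dispose of it first. If $Z$ is an ELMD and $\xi$ is a candidate arbitrage of the first kind, then for each $x > 0$ pick $X^{x,\vartheta} \in \X$ with $X^{x,\vartheta}(T) \geq \xi$. Since $Z X^{x,\vartheta}$ is a nonnegative local martingale, it is a supermartingale, so $\expec[Z(T) X^{x,\vartheta}(T)] \leq x$, hence $\expec[Z(T) \xi] \leq x$ for every $x > 0$, forcing $\expec[Z(T)\xi] = 0$. Because $Z(T) > 0$ and $\xi \geq 0$, this gives $\xi = 0$ a.s., contradicting $\prob[\xi > 0] > 0$. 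Thus no arbitrage of the first kind exists.

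The substantive direction is ``NA$_1$ $\Rightarrow$ existence of an ELMD.'' By the remark in the excerpt, NA$_1$ is equivalent to the set $\Ko := \{X(T) : X \in \X,\ X(0)=1\}$ being bounded in probability. The plan is to run a logarithmic-utility maximization argument to manufacture the deflator, following the ``marginal-utility'' route. First I would pass to an auxiliary probability $\tprob \sim \prob$ under which things are integrable: since $\Ko$ is bounded in probability, one can find a strictly positive $\F(T)$-measurable density such that, under $\tprob$, the wealth processes have enough integrability for $\expec^{\tprob}[\log X(T)]$ to be well-defined and the log-utility problem $u := \sup\{\expec^{\tprob}[\log X(T)] : X \in \X,\ X(0)=1\}$ to be finite. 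Boundedness in probability of $\Ko$ is exactly what rules out $u = +\infty$ (an unbounded log-utility would come from a scalable near-arbitrage). Next I would establish existence of an optimizer $\hX$: the set of terminal wealths is convex, and using a Komlós-type / Fatou-closedness argument on nonnegative wealth processes (as in the classical optimal-investment literature) one extracts a maximizing sequence whose convex combinations converge a.s.\ to some $\hX(T)$ attained by an admissible $\hX \in \X$ with $\hX(0) = 1$. The candidate deflator is then $Z := (\ud\tprob/\ud\prob\big|_{\F}) \cdot \hX^{-1}\cdot(\text{appropriate normalization})$ — more precisely, one forms $\hX^{-1}$ times the density process of $\tprob$ with respect to $\prob$, rescaled so that $Z(0)=1$; it is automatic that $Z(T) > 0$ and $Z(0) = 1$.

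The key step — and the main obstacle — is verifying that $Z X$ is a $\prob$-local martingale for every $X \in \X$, equivalently that $\hX^{-1} X$ times the $\tprob$-density process is a $\prob$-local martingale, i.e.\ that $\widetilde Z X / \hX$ is a $\tprob$-local martingale where $\widetilde Z$ is the $\tprob$-analogue. This is where first-order optimality of $\hX$ must be exploited. Formally, for any $X \in \X$ with $X(0) = 1$ the path $\epsilon \mapsto (1-\epsilon)\hX + \epsilon X$ is admissible, and $\frac{\ud}{\ud\epsilon}\big|_{\epsilon=0^+} \expec^{\tprob}[\log((1-\epsilon)\hX(T)+\epsilon X(T))] \leq 0$ yields $\expec^{\tprob}[X(T)/\hX(T)] \leq 1 = \expec^{\tprob}[\hX(T)/\hX(T)]$; applying this at stopping times (using that wealth processes stopped and restarted stay in $\X$ after renormalization) upgrades the inequality to the supermartingale property of $X/\hX$ under $\tprob$, and the reverse inequality at $X = \hX$ forces $\hX/\hX \equiv 1$ to be a true martingale, pinning down that $X/\hX$ is a genuine supermartingale and, after the change of measure back to $\prob$, that $Z X$ is a nonnegative $\prob$-supermartingale starting at $X(0)$. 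Promoting ``supermartingale'' to ``local martingale'' for the specific wealth processes requires a further argument: one applies the supermartingale property to the family $\{1 + (\vartheta\indic_{\dbra{0,\sigma}})\cdot S\}$ and uses the optional decomposition or a direct stopping-time argument to show the drift part vanishes. The genuinely technical parts — which is why the excerpt defers them — are (i) constructing the auxiliary measure $\tprob$ and proving $u < \infty$ with care about $\pm\infty$ values of the logarithm, (ii) the closedness/compactness argument giving the optimizer $\hX$ in the non-locally-bounded semimartingale setting, and (iii) the delicate passage from the static first-order condition to the dynamic local-martingale property, which must be done localizing along stopping times and handling the possibility that $\hX$ or the intermediate wealth processes hit zero.
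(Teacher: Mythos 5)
Your first paragraph is essentially the paper's proof of the easy implication (ELMD $\Rightarrow$ NA$_1$), and your overall architecture for the converse --- tilt the measure, run a logarithmic-utility/num\'eraire argument, and take $Z$ to be the density process times the reciprocal of the optimal wealth --- is in the right spirit: the paper also produces the ELMD as $Z(t) = (1/\tX(t))\,(\ud\tprob/\ud\prob)|_{\F(t)}$ for a suitable equivalent $\tprob$ and a suitable $\tX \in \X_{++}$.

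However, there is a genuine gap, and it sits exactly at the step you label (iii); it is not a removable technicality. When $S$ has jumps, the log-optimal (num\'eraire-type) portfolio $\hX$ under \emph{any} given equivalent measure only makes $X/\hX$ a nonnegative supermartingale for $X \in \X$: the pointwise optimization of the growth rate over investment proportions may be attained at the boundary of the admissible interval $I = [\ell, r]$ dictated by the support of the jump measure, with strictly nonzero one-sided derivative there, and then the drift of $X/\hX$ is genuinely negative for some $X \in \X$ --- no localization, stopping-time argument or optional decomposition will make it vanish (the latter would anyway presuppose the very deflator you are trying to build). Your auxiliary measure $\tprob$ is chosen only to make $\expec^{\tprob}[\log X(T)]$ well behaved, which does nothing to cure this boundary phenomenon. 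The paper's whole point is that $\tprob$ must instead be engineered through the predictable characteristics $(a, c, \kappa)$ of $S$: one constructs a strictly positive predictable random field $Y$ (conditions (Y1)--(Y4), built separately on nine predictable sets) that Girsanov-tilts the jump compensator $\nu$ into $\nu^Y$ so that $\ngoy(\ell) \geq 0$ and $\ngoy(r) \leq 0$ hold; only then does the optimal proportion $\tp^Y$ satisfy the first-order condition with \emph{equality}, making $X/\tX$ a sigma- and hence local $\prob^Y$-martingale (the paper's Lemma \ref{lem: rrr}), with the total-variation estimate of Lemma \ref{lem: hellinger} keeping $\prob^Y \sim \prob$. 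A further point your sketch glosses over is $S$-integrability of the optimal proportion, which must itself be deduced from NA$_1$ (boundedness in probability) and is not delivered by a Koml\'os-type existence argument for the terminal optimizer. So while the strategy is recognizably the ``marginal utility'' route the paper alludes to, the missing idea --- choosing the measure change at the level of the jump characteristics so that the supermartingale deflator becomes a local martingale deflator --- is precisely the content of the paper's Theorem \ref{thm: help}, and without it your step from supermartingale to local martingale fails.
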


\begin{rem}
In \cite{Kar_09_fin_add_ftap}, which is in a certain sense a sequel to this paper, it is argued that although an ELMD does not generate a probability measure, its local martingale structure allows one to define a \emph{finitely additive} probability that is \emph{locally countably additive} and \emph{weakly equivalent} to $\prob$, and further makes discounted asset-price processes behave like ``local martingales''. More precisely, Theorem \ref{thm: main} can be reformulated to state that condition NA$_1$ is valid if and only if there exists $\qprobb : \F \mapsto [0,1]$ and a a sequence $(\tau_n)_{n \in \Natural}$ of stopping times with $\limn \prob \bra{\tau_n = T} = 1$ such that:
\begin{itemize}
  \item $\qprobb[\emptyset] = 0$, $\qprobb[\Omega] = 1$, and $\qprobb$ is (finitely) additive: $\qprobb[A \cup B] = \qprobb[A] + \qprobb[B]$ whenever $A \in \F$ and $B \in \F$ satisfy $A \cap B = \emptyset$;
  \item for $A \in \F$, $\prob[A] = 0$ implies $\qprobb[A] = 0$;
  \item when restricted on $\F_{\tau_n}$, $\qprobb$ is countably additive and equivalent to $\prob$, for all $n \in \Natural$.
  \item $\int_{\Omega} X_{\tau^n \wedge \tau} \ud \qprobb = X_0$ holds for all $X \in \X$, $n \in \Natural$ and all stopping times $\tau$.
\end{itemize}
Using this reformulation, Theorem \ref{thm: main} bears more resemblance to the FTAP of \cite{MR1304434}. In fact, as already mentioned in the Introduction, in \cite{Kar_09_fin_add_ftap} Theorem \ref{thm: main} is used as an intermediate step in proving the FTAP in \cite{MR1304434}. 
\end{rem}

% \begin{rem} \label{rem: nec_suf in terms of pred_char}
% As the NA$_1$ condition is equivalent to NUPBR, one can use the results of \cite{MR2335830} to show that the conditions of Theorem \ref{thm: main} are  further equivalent to the finiteness of an explicit $[0, \infty]$-valued \emph{deterministic} functional of the predictable characteristics of $S$. This enables one to check the validity of NA$_1$ in a very straightforward way, which is not the case for NFLVR.
% \end{rem}

\begin{rem}
Theorem \ref{thm: main} is stated for one-dimensional semimartingales $S$, as even for this ``simple'' case the proof is quite technical and requires taking care of many different issues, as the reader will appreciate in Section \ref{sec: proof} below. There is no doubt that the result is still valid for the multi-dimensional semimartingale case, albeit its proof is expected to be significantly more involved.
\end{rem}

\section{The Proof of Theorem \ref{thm: main}} \label{sec: proof}

\subsection{Proving Theorem \ref{thm: main} with the help of an auxiliary result}
The proof of one implication of Theorem \ref{thm: main} is easy and somewhat classic, but will be presented anyhow here for completeness. Start by assuming the existence of an ELMD $Z$ and pick any sequence $(X_k)_{k \in \Natural}$ of wealth processes in $\X$ such that $\limk X_k(0) = 0$ as well as $X_k(T) \geq \xi$ for some nonnegative random variable $\xi$. Since $Z X_k$ is a nonnegative local martingale, thus a $\prob$-supermartingale,
\[
\expec[Z(T) \xi] \leq \expec[Z(T) X_k(T)] \leq Z(0) X_k(0) = X_k(0)
\]
holds for all $k \in \Natural$. Therefore, $\expec[Z(T) \xi] \leq 0$. Since $\prob[Z(T) > 0, \, \xi \geq 0] = 1$, $\expec[Z(T) \xi] \leq 0$ holds only if $\prob[\xi = 0] = 1$. Therefore, $\xi$ cannot be an arbitrage of the first kind, and condition NA$_1$ holds.

\smallskip
It remains to prove the other implication, which is considerably harder. Define
\[
\X_{++} \dfn \set{X \in \X \ | \ X > 0 \text{ and } X_- > 0}.
\]
Since condition NA$_1$ is equivalent to condition NUPBR of \cite{MR2335830}, the general results of the latter paper imply that condition NA$_1$ is equivalent to the existence of $\hX \in \X_{++}$ with $\hX(0) = 1$ such that, with $Z \dfn 1 / \hX$,  $Z X$ is a supermartingale for all $X \in \X_{++}$. (Note that the results of \cite{MR2335830} have been established when $S \in \X_{++}$; however, this condition is unnecessary. At any rate, in the present paper we give a full treatment instead of depending on results from \cite{MR2335830}.) Unfortunately, when jumps are present in $S$, these last supermartingales might fail to be local martingales. In order to achieve our goal, we shall have to slightly alter the original probability using the predictable characteristics of $S$. (The idea of how to perform such a change of probability is already present in \cite{MR1647282} and \cite{MR1804665}.) In \S\ref{subsec: dynamic case} below we shall establish the following result, certainly interesting in its own right. Before stating it, recall that for a signed measure $\mu$ on $(\Omega, \F)$, its \textsl{total variation} norm is defined as $\normTV{\mu} \dfn \sup_{A \in \F} \abs{\mu[A]}$.
%\smallskip
%It remains to prove the other implication, which is considerably harder. Clearly, it is enough to show the existence of a nonnegative process $Z$ with $Z(0) = 1$, $Z(T) > 0$, and such that $Z X$ is a local martingale for all $X \in \X_{++} \dfn \set{X \in \X \ | \ X > 0 \text{ and } X_- > 0}$. Now, since condition NA$_1$ is equivalent to condition NUPBR of \cite{MR2335830}, according to Theorem 4.12 of the latter paper, condition NA$_1$ is equivalent to the existence of $\hX \in \X_{++}$ with $\hX(0) = 1$ such that, with $Z \dfn 1 / \hX$,  $Z X$ is a supermartingale for all $X \in \X_{++}$. (The results of \cite{MR2335830} have been established when $S \in \X_{++}$. However, this condition is unnecessary; one can simply follow the development in \cite{MR2335830} working with $S$ directly, instead of the ``returns'' process $\int_0^\cdot (1 / S (t-)) \ud S(t)$ --- all the proofs carry through.) Unfortunately, these last supermartingales might fail to be local martingales. In order to achieve our goal, we shall have to slightly alter the original probability using the predictable characteristics of $S$. (The idea of how to perform such a measure change is already present in \cite{MR1647282} and \cite{MR1804665}.) In \S\ref{subsec: dynamic case} below we shall establish the following result, certainly interesting in its own right.
%
%

\begin{thm} \label{thm: help}
Assume that condition \NAone \ holds. Then, for any $\epsilon > 0$, there exists a probability $\tprob = \tprob(\epsilon)$ with the following properties:
\begin{enumerate}
  \item $\tprob$ is equivalent to $\prob$ on $\F(T)$.
  \item $\normTV{\tprob - \prob} \leq \epsilon$.
  \item There exists $\tX \in \X_{++}$ with $\tX(0) = 1$ such that $X / \tX$ is a \emph{local $\tprob$-martingale} for all $X \in \X$.
\end{enumerate}
\end{thm}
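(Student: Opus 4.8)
\emph{The plan.} Granting Theorem~\ref{thm: help}, the hard implication of Theorem~\ref{thm: main} follows at once: with $D_t \dfn \expec[\ud\tprob/\ud\prob \mid \F(t)]$ and $Z \dfn D/\tX$ one has $Z(0)=1$, $Z(T)=D(T)/\tX(T) > 0$ (since $\tprob \sim \prob$ and $\tX \in \X_{++}$), and $ZX = D\,(X/\tX)$ is a local $\prob$-martingale whenever $X/\tX$ is a local $\tprob$-martingale, so $Z$ is an \ELMD. The content is therefore entirely in building $\tprob$, which I would do in two stages: reducing the local-martingale requirement to a pointwise statement about the predictable characteristics of $S$, and then constructing $\tprob$ by perturbing only the \emph{large} jumps of $S$.

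\emph{Stage 1 (reduction to predictable characteristics).} Using the \NUPBR-form of \NAone, there is $\hX \in \X_{++}$ with $\hX(0)=1$ such that $1/\hX$ is a supermartingale deflator; rather than quote this I would rederive it, together with its analytic content, from the first-order conditions of the (localized) logarithmic-utility problem. Writing $\hX = \mathcal E(\widehat{\vartheta}\cdot S)$ for a predictable $S$-integrable $\widehat{\vartheta}$ and disintegrating the predictable characteristics $(B,C,\nu)$ of $S$ (relative to the truncation $\hx$) against a common predictable increasing process $A$, one obtains $\ud A\otimes\prob$-a.e.\ the triplet $(b_t,c_t,\nu_t)$, the support/drift restriction forced by \NAone, and the pointwise relation pinning down $\widehat{\vartheta}_t$. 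The key is to diagnose \emph{why} the supermartingales $ZX$ can fail to be local martingales: for a general $X \in \X$---already for $X$ built out of $S$ when $\int_{\{|x|>1\}}|x|\,\nu_t(\ud x) = \infty$---the $\nu_t$-integral that ought to compensate the jumps of $ZX$ converges only conditionally. This is exactly the $\sigma$-martingale-versus-local-martingale gap, caused entirely by the \emph{large} jumps of $S$.

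\emph{Stage 2 (taming the large jumps).} Fix $R \geq 1$ and let $\tprob = \tprob_R$ have density process $\mathcal E(N_R)$, where $N_R$ is the purely discontinuous local martingale generated by the L\'evy-system density $Y_t(x) \dfn 1$ for $|x| \leq R$ and $Y_t(x) \dfn R/|x|$ for $|x| > R$ (any such taper serves). Because $Y_t \equiv 1$ on $\{|x|\leq 1\}$ the Girsanov drift correction vanishes, so under $\tprob$ the semimartingale $S$ keeps $B$ and $C$ and only $\nu$ becomes $Y\cdot\nu$, which now obeys $\int_{\{|x|>1\}}|x|\,Y_t(x)\,\nu_t(\ud x) \leq 2R\int(1\wedge x^2)\,\nu_t(\ud x) < \infty$. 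Then, after a routine localization: (i) $\mathcal E(N_R)$ is a true uniformly integrable martingale with $\mathcal E(N_R)_T > 0$, giving $\tprob \sim \prob$ on $\F(T)$, i.e.\ (1); (ii) $\normTV{\tprob - \prob} = \tfrac{1}{2}\expec[\abs{\mathcal E(N_R)_T - 1}]$ is dominated by $\expec\int_0^T\int_{\{|x|>R\}}\abs{1 - Y_t(x)}\,\nu_t(\ud x)\,\ud A_t \leq \expec\int_0^T \nu_t(\{|x|>R\})\,\ud A_t$, and since $\indic_{\{|x|>R\}}\leq 1\wedge x^2$ pointwise for $R\geq 1$ while $\nu_t(\{|x|>R\})\downarrow 0$ as $R\uparrow\infty$, dominated convergence drives this to $0$; choosing $R = R(\epsilon)$ large yields (2); (iii) \NAone, being invariant under equivalent changes of measure, still holds under $\tprob$, so rerunning the logarithmic first-order conditions with the new characteristics produces $\tX = \mathcal E(\widetilde{\vartheta}\cdot S) \in \X_{++}$ with $\tX(0)=1$; and now every relevant $Y_t\nu_t$-integral converges absolutely, so $X/\tX$ has no drift and is a genuine local $\tprob$-martingale for all $X \in \X$, i.e.\ (3).

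\emph{Where the difficulty lies.} The sensitive steps are in Stage~1 and in the passage from the pointwise picture to processes: the measurable selection making $\widehat{\vartheta}$ (and $\widetilde{\vartheta}$) predictable and $S$-integrable, the well-posedness of the $\nu$-integrals entering the drift relations, and---above all---showing that the merely conditional convergence of the jump compensators under $\prob$ turns into honest absolute convergence under $\tprob$ \emph{uniformly over the entire class $\X$}, so that ``$\sigma$-martingale'' is promoted to ``local martingale'' in a single stroke. A second, more technical point is that $\mathcal E(N_R)$ must not lose mass; the total-variation bound of step~(ii) together with localization is precisely what delivers this, so the smallness asserted in (2) is both a conclusion and the device that makes $\tprob$ an honest equivalent probability.
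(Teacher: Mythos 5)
Your reduction of Theorem \ref{thm: main} to Theorem \ref{thm: help} is exactly the paper's, and your overall skeleton --- a Girsanov change of measure acting through a predictable random field $Y$ on the jump compensator, a total-variation estimate for the density, and then the first-order conditions of the (log-utility) growth problem to manufacture $\tX$ --- is also the paper's. The gap is in the one place where all the work lies: the choice of $Y$. Your diagnosis that the failure of the local martingale property is ``caused entirely by the large jumps'' (conditional versus absolute convergence of the compensating integrals, i.e.\ the sigma-martingale gap) is not correct, and consequently the taper $Y(x)=1$ for $|x|\le R$, $Y(x)=R/|x|$ for $|x|>R$ cannot deliver property (3). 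The true obstruction is a \emph{corner solution} in the growth-rate maximization: writing $I=[\ell,r]\cap\Real$ for the set of admissible investment fractions, the drift rate of $X^{(p)}/X^{(\tp)}$ is $(p-\tp)\,\ngoy(\tp)$, so it vanishes for \emph{all} admissible $p$ only if the first-order condition $\ngoy(\tp)=0$ holds with equality; under $\prob$ the maximizer may sit at $\ell$ or $r$ with $\ngo\neq 0$ there, and then every candidate deflator of the form $1/\tX$ produces strict supermartingales. This has nothing to do with integrability of large jumps. Concretely, take $c=0$, a large positive drift, and $\kappa$ carried by $(-1/2,0)$ with support accumulating at $-1/2$ but $\kappa[\{-1/2\}]=0$ (an NFLVR, hence \NAone, model): all jumps are bounded by $1/2$, every compensator integral converges absolutely, yet $r=2$, $\ngo>0$ on all of $I$, and no $\tX\in\X_{++}$ works under $\prob$. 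Since $|x|\le 1/2<R$, your taper gives $Y\equiv 1$, i.e.\ $\tprob_R=\prob$, so ``rerunning the logarithmic first-order conditions'' under $\tprob_R$ meets the same corner solution and (3) fails.

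This is precisely what the paper's condition (Y4), $\ngoy(\ell)\ge 0$ and $\ngoy(r)\le 0$, and the nine-case construction of $Y$ in \S\ref{subsubsec: construction of density} are designed to achieve: on sets like $P_3$ the mass of $\kappa$ near the critical jump size $x=-1/r$ is reweighted so that $\g^{Y}(r)=-\infty$, and on $P_1=\set{\ell=0,\,r=\infty}\cap\set{a<0}$ mass is \emph{added} to very large positive jumps so as to raise the drift and force $\ngoy(0)\ge 0$. Note that on $P_1$ your taper moves in the opposite direction: it removes mass from large positive jumps, lowers $a^Y$, and makes the corner at $0$ worse, so it is not merely insufficient but counterproductive there. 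Two smaller points: (i) your total-variation step uses dominated convergence against $\expec\int_0^T\nu_t(\{|x|>R\})\,\ud A_t$, which need not be finite (the expected number of large jumps can be infinite); the paper avoids this by imposing the \emph{pathwise} bound $\int_\Real|Y-1|\,\kappa[\ud x]\le \eta=\epsilon/(2|1+G|^2)$, which makes the time integral surely at most $\epsilon/2$ (Lemma \ref{lem: hellinger}); (ii) the $S$-integrability of the optimal fraction, which you flag but defer, is genuinely needed and is where the paper invokes the NUPBR/boundedness-in-probability characterization of \NAone. So the proposal reproduces the outer architecture of the proof but is missing its essential mechanism: a jump-measure change tailored to the boundary behavior of $\ngo$ on $[\ell,r]$, not to the size of the jumps.
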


To see how Theorem \ref{thm: help} completes the proof of Theorem \ref{thm: main}, assume that condition NA$_1$ holds, as well as the statement of Theorem \ref{thm: help}. Define the process $Z$ via $Z_t \dfn (1 / \tX(t))( \ud \tprob / \ud \prob)|_{\F (t)}$ for $t \in \Real_+$, where $( \ud \tprob / \ud \prob)|_{\F (t)}$ denotes the Radon-Nikod\'ym derivative of $\tprob$ with respect to $\prob$ when the two probabilities are restricted on the sigma-algebra $\F(t)$. Then, Theorem \ref{thm: help}(1) implies that $Z(0) = 1$ and $Z(T) > 0$, and the fact that $Z X$ is a local martingale for all $X \in \X$ follows by Theorem \ref{thm: help}(3).

\subsection{The proof of Theorem \ref{thm: help}} \label{subsec: dynamic case}

In the course of the proof, results regarding the general theory of stochastic processes from \cite{MR1943877} are used. There are ideas from \cite{MR2335830} that are utilized throughout the proof; as the latter paper is long and technical, and in an effort to be as self-contained as possible, we are providing full arguments whenever possible. In fact, there is only one result from \cite{MR2335830} whose statement will just be assumed; this happens at the end of \S \ref{subsubsec: growth rates}.

\subsubsection{Predictable characteristics}

In order to prove Theorem \ref{thm: help},
we can assume without loss of generality that $S$ is a special semimartingale under $\prob$. Indeed, if this is not the case, we can change the original probability $\prob$ into another equivalent $\overline{\prob}$ using the Radon-Nikod\'ym density
\[
\frac{\ud \overline{\prob}}{\ud \prob} \dfn \frac{1}{\expec \bra{\pare{1 + \gamma \sup_{t \in \Real_+} |S(t)|}^{-1} }} \pare{1 + \gamma \sup_{t \in \Real_+} |S(t)|}^{-1},
\]
where $\gamma > 0$ is small enough so that $\normTV{\overline{\prob} - \prob} \leq \epsilon / 2$. Then, $\overline{\expec} \big[ \sup_{t \in \Real_+} |S(t)| \big] < \infty$, where ``$\overline{\expec}$'' denotes expectation under $\overline{\prob}$; in particular, $S$ is a special semimartingale under $\overline{\prob}$. Then, the validity of Theorem \ref{thm: help} can be shown for $\overline{\prob}$ and with $\epsilon / 2$ replacing $\epsilon$.

\smallskip

Now, assuming that $S$ is a special semimartingale under $\prob$, write its \emph{canonical} decomposition $S = S_0 + A + S^\co  + \int_{(0, \cdot] \times \Real} x \pare{\mu [\ud t, \ud x] - \nu [\ud t, \ud x]}$. Here, $A$ is \emph{predictable and of finite variation}, $S^\co$ is a local martingale with \emph{continuous} paths and $\int_{(0, \cdot] \times \Real} x \pare{\mu [\ud t, \ud x] - \nu [\ud t, \ud x]}$ is a
\emph{purely discontinuous} local
martingale. As usual, $\mu$ is the \textsl{jump measure} of $S$ defined via $\mu (D) := \sum_{t \in \Real_+} \indic_{D} (t, \Delta S(t)) \indic_{\Real \setminus \set{0} } (t)$, for $D \subseteq \Real_+ \times \Real$, and $\nu$ is the \textsl{predictable compensator} of the measure $\mu$. Since $S$ is a special semimartingale, we have $\int_{\Real_+ \times \Real} \pare{|x| \wedge |x|^2} \,\nu[\udw t, \udw x] < \infty$. We introduce the \textsl{quadratic covariation} process $C := [S^\co, S^\co]$ of $S^\co$, and define the predictable nondecreasing scalar process
\[
G \dfn C + \int_{(0, \cdot]} |\ud A(t)| + \int_{(0, \cdot] \times \Real} \pare{|x| \wedge |x|^2} \, \nu[\udw t, \udw x].
\]
All three processes $A$, $C$, and $\nu$ are absolutely continuous with respect to $G$. Therefore, we can write
\[
A= \int_{(0, \, \cdot]} a(t) \ud G(t), \ C = \int_{(0, \, \cdot]} c(t)  \ud G(t), \text{ and } \nu [(0, \cdot] \times E ] = \int_{(0, \, \cdot]} \kappa (t)[E] \ud G(t),
\]
where  $a$, $c$ and $\kappa$ are predictable, $a$ is a scalar process, $c$ a nonnegative scalar process, $\kappa$ a process with values in the set of measures on $(\Real, \B(\Real))$, where $\B(\Real)$ is the Borel sigma-algebra on $\Real$, that do not charge $\set{0}$ and integrate the function $\Real \ni x \mapsto |x| \wedge |x|^2$, and $E \in \B(\Real)$.

Condition NA$_1$ enforces some restrictions on the triplet of predictable characteristics of $S$. The next result is a consequence of \cite[Theorem 3.15(2)]{MR2335830}, but we provide the quick argument for completeness.
\begin{lem} \label{lem: consequences of na1}
Assume condition \NAone \ in the market. Then, with $\Lambda \dfn \Lambda_+ \cup \Lambda_-$, where
\begin{align*}
\Lambda_+ &\dfn \set{\kappa[(- \infty, 0)] = 0, \ c = 0, \ a > \int_{(0, \infty)} x \kappa [\ud x]} \text{ and} \\
\Lambda_- &\dfn \set{\kappa[(0, \infty)] = 0, \ c = 0, \ a < \int_{(-\infty, 0)} x \kappa [\ud x]},
\end{align*}
the predictable set $\Lambda$ is $(\prob \otimes G)$-null. (In particular, $\set{\kappa[\Real] = 0, \ c = 0, \ a \neq 0}$ is  $(\prob \otimes G)$-null.)
\end{lem}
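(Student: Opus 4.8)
The plan is to prove the contrapositive: if the predictable set $\Lambda$ fails to be $(\prob \otimes G)$-null, then condition \NAone\ is violated. By the obvious symmetry between holding $S$ and short-selling it, it suffices to produce an arbitrage of the first kind under the assumption $(\prob \otimes G)(\Lambda_+) > 0$; the case $(\prob \otimes G)(\Lambda_-) > 0$ is handled in exactly the same way, with the strategy $\indic_{\Lambda_+}$ below replaced by $-\indic_{\Lambda_-}$.

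First I would set $\vartheta \dfn \indic_{\Lambda_+}$ and consider the zero-cost wealth process $X \dfn X^{0, \vartheta} = \int_0^\cdot \indic_{\Lambda_+}(t) \ud S(t)$. Checking that $\indic_{\Lambda_+}$ is $S$-integrable is routine given that $S$ is special, since each of $\int_0^\cdot \indic_{\Lambda_+} \ud C$, $\int_0^\cdot \indic_{\Lambda_+} |\ud A|$ and $\int_{(0,\cdot]\times\Real} \indic_{\Lambda_+}\pare{|x|\wedge|x|^2}\, \nu[\ud t, \ud x]$ is dominated by $G$. The crux is to read off the canonical decomposition of $X$ directly from the three defining conditions of $\Lambda_+$: the condition $c = 0$ on $\Lambda_+$ makes the continuous local-martingale part of $X$ vanish, as its quadratic variation is $\int_0^\cdot \indic_{\Lambda_+} c\, \ud G \equiv 0$; the condition $\kappa[(-\infty,0)]=0$ on $\Lambda_+$ makes the compensator of the negative jumps of $X$ vanish, so that those jumps themselves vanish and the jump part of $X$, namely $\sum_{t \leq \cdot} \indic_{\Lambda_+}(t)\pare{\Delta S(t)}^+$, is nondecreasing; and the remaining drift part of $X$ equals $\int_0^\cdot \indic_{\Lambda_+}\pare{a - \int_{(0,\infty)} x\, \kappa[\ud x]}\ud G$, whose integrand is \emph{strictly} positive exactly on $\Lambda_+$ by the third defining condition. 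Combining the three, $X$ is a nondecreasing process with $X(0)=0$, so $X \in \X$.

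It then remains to note that $X(T) > 0$ on a set of positive probability. Indeed $X(T)$ dominates the nonnegative random variable $\int_0^T \indic_{\Lambda_+}(t)\pare{a(t) - \int_{(0,\infty)} x\, \kappa(t)[\ud x]}\ud G(t)$, whose $\prob$-expectation is, by Tonelli, the integral against $\prob\otimes G$ of an integrand that is strictly positive precisely on $\Lambda_+$; hence the hypothesis $(\prob\otimes G)(\Lambda_+) > 0$ forces $\prob[X(T)>0]>0$. Putting $\xi \dfn X(T)$, we then have $\prob[\xi \geq 0]=1$, $\prob[\xi>0]>0$, and, for every $x > 0$, $X^{x,\vartheta} = x + X$ belongs to $\X$ with $X^{x,\vartheta}(T) = x + \xi \geq \xi$; thus $\xi$ is an arbitrage of the first kind, contradicting \NAone. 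The parenthetical claim is immediate from the inclusion $\set{\kappa[\Real]=0,\ c=0,\ a\neq0} \subseteq \Lambda_+\cup\Lambda_- = \Lambda$.

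The only mildly delicate point I anticipate is the bookkeeping that produces the canonical decomposition of $\int\indic_{\Lambda_+}\ud S$ together with the $S$-integrability check; there is no genuine conceptual obstacle, since $\Lambda_\pm$ is precisely the predictable set of instants at which the asset (respectively a short position in it) can only move upward, so a buy-and-hold position confined to that set is a riskless, nondecreasing wealth process, which is plainly incompatible with condition \NAone.
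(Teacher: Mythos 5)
Your proof is correct and follows essentially the same route as the paper: a buy-and-hold position on the predictable set where the price can only move upward (the paper uses $\vartheta = \indic_{\Lambda_+} - \indic_{\Lambda_-}$ to treat both halves at once, whereas you split the two cases by symmetry) yields a nondecreasing wealth process started from zero whose terminal value is strictly positive with positive probability, hence an arbitrage of the first kind. The decomposition bookkeeping you flag is exactly what the paper records in its displayed formula for $X^{0,\vartheta}$, including the implicit finiteness of $\int_{(0,\infty)} x\, \kappa[\ud x]$ on $\Lambda_+$.
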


\begin{proof}
Define $\vartheta \dfn \indic_{\Lambda_+} - \indic_{\Lambda_-}$. Then, it is straightforward to see that
\[
X^{0, \vartheta} = \int_{(0, \cdot]} \indic_{\Lambda} (t) \abs{a(t) - \int_{\Real} x \kappa(t) [\ud x]} \ud G(t) + \sum_{t \in (0, \cdot]} \indic_{\Lambda} (t) |\Delta S(t)|,
\]
where observe that the integral $\int_{\Real} x \kappa [\ud x]$ is always well defined on $\Lambda$. It is clear that $X^{0, \vartheta}$ is non-decreasing, i.e., $X^{0, \vartheta} \in \X$. Furthermore, if $\Lambda$ fails to be $(\prob \otimes G)$-null, then $\prob[X^{0, \vartheta} (T) > 0] > 0$. Let $\xi \dfn X^{0, \vartheta} (T)$, since $X^{x, \vartheta} (T) = x + \xi \geq \xi$ for all $x > 0$, $\xi$ is an arbitrage of the first kind. Therefore, under condition NA$_1$, $\Lambda$ has to be $(\prob \otimes G)$-null.
\end{proof}

\subsubsection{Changes of probability} \label{subsubsec: change of prob}

In what follows, a \textsl{strictly positive predictable random field} will refer to a function $Y : \Omega \times \Real_+ \times \Real \mapsto (0, \infty)$ that is measurable with respect to the product of the predictable sigma-algebra on $\Omega \times \Real_+$ with the Borel sigma-algebra on $\Real$. For any strictly positive predictable random field $Y$, let $\nu^Y$ be the predictable random measure that has density $Y$ with respect to $\nu$; in other words,
\begin{equation}  \label{eq: nuY}
\nu^Y [(0, \cdot] \times E ] = \int_{(0, \, \cdot]} \kappa^Y (t) [E] \ud G (t) = \int_{(0, \, \cdot]} \pare{ \int_E Y(t, x) \kappa (t)[\ud x] } \ud G(t)
\end{equation}
holds for all $E \in \B(\Real)$. For all $t \in \Real_+$, $Y(t, \cdot)$ is the density of $\kappa^Y (t)$ with respect to $\kappa (t)$.

Define the $(0, \infty)$-valued predictable process
\[
\eta \dfn \frac{\epsilon}{2 \abs{1 + G}^2},
\]
where we shall be assuming without loss of generality that $0 < \epsilon < 1$. In the sequel, we shall only consider strictly positive predictable random fields $Y$ such that the following properties are additionally identically satisfied:
\begin{enumerate}
  \item[(Y1)] $\int_\Real \pare{|x| \wedge |x|^2} \, \kappa^Y [\udw x]  <  \infty$.
  \item[(Y2)] $\int_\Real \abs{Y(x) - 1} \, \kappa[\udw x] \leq \eta$.
  \item[(Y3)] $\kappa[\Real] = \kappa^Y [\Real]$.
%  \item[(Y4)] $Y \geq 1/4$.
\end{enumerate}
(The dependence of processes on $(\omega, t) \in \Omega \times \Real_+$ is usually suppressed from notation to ease the reading. Whenever appropriate from the context, and for clarification purposes, we shall sometimes write $Y(x)$ or $Y(t,x)$ for $Y$.) 

Property (Y2) of $Y$ implies the estimate
\begin{eqnarray}
\nonumber  \int_{\Real_+ \times \Real} \abs{ Y(t, x) - 1 }  \nu[\udw t, \udw x] &=& \int_{\Real_+} \pare{\int_{\Real} \abs{ Y(t, x) - 1 } \, \kappa(t)[\udw x]} \ud G(t) \\
\label{eq: pre-hellinger}   &\leq& \int_{\Real_+} \eta(t) \ud G(t) \ = \ \frac{\epsilon}{2} \int_{\Real_+} \frac{\ud G(t)}{|1 + G(t)|^2} \ \leq \ \frac{\epsilon}{2}.
\end{eqnarray}
It follows that the process $M \dfn \int_{(0, \cdot] \times \Real} \pare{ Y (t, x) - 1 } \pare{ \mu [\udw t, \udw x] - \nu[\udw t, \udw x]}$ is a well defined local martingale. Observe that for all $t \in \Real_+$, we have
\[
\Delta M (t) = Y(t, \Delta S(t)) - 1 - \pare{\int_{\Real} \pare{Y(t,x) -1 } \kappa [\ud x]} \Delta G(t) = Y(t, \Delta S(t)) - 1 > - 1,
\]
holding in view of the fact that $Y$ is strictly positive and $\int_{\Real} \pare{Y(t,x) -1 } \kappa [\ud x] = \kappa^Y[\Real] - \kappa [\Real] = 0$, which follows from (Y3). With ``$\Exp$'' denoting the \textsl{stochastic exponential} operator, define 
\[
L \dfn \Exp (M) = \Exp \pare{ \int_{(0, \cdot] \times \Real} \pare{ Y (t, x) - 1 } \pare{ \mu [\udw t, \udw x] - \nu[\udw t, \udw x]}}.
\]
Combining \eqref{eq: pre-hellinger} with $\Delta M > -1$, a use of \cite[Theorem 12]{MR515738} gives that $L$ is a uniformly integrable martingale with $\prob [L(T) > 0] = 1$. However, because the last paper may be hard to obtain, we provide a quick argument in the present special case. At the same time, we show that the probability defined by $L$ satisfies requirement (2) of Theorem \ref{thm: help}.

\begin{lem} \label{lem: hellinger}
Let $Y$ be a strictly positive random field such that (Y1), (Y2) and (Y3) hold. With the above notation, we have $\prob [L(T) > 0] = 1$ and $\expec \bra{\sup_{t \in \Real_+} |L(t) - 1|} \leq \epsilon$. In particular, the recipe $\ud \prob^Y / \ud \prob = L(T)$ defines a probability $\prob^Y$ that is \emph{equivalent} to $\prob$ on $\F(T)$ such that $\normTV{\prob^Y - \prob} \leq \epsilon$.
\end{lem}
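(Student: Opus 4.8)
The plan is to analyze $L = \Exp(M)$ directly via its Doléans-Dade product formula, since $M$ is a purely discontinuous local martingale with jumps $\Delta M(t) = Y(t,\Delta S(t)) - 1 > -1$. First, I would record that strict positivity of $L(T)$ is automatic from $\Delta M > -1$: the stochastic exponential of a local martingale with jumps strictly greater than $-1$ stays strictly positive (see \cite{MR1943877}), so $\prob[L(T) > 0] = 1$ needs no extra work. The substantive part is the $L^1$-bound $\expec[\sup_{t} |L(t) - 1|] \leq \epsilon$, which will also immediately give uniform integrability (hence $L$ is a true UI martingale, so $\ud\prob^Y/\ud\prob = L(T)$ genuinely defines a probability) and, via $\normTV{\prob^Y - \prob} = \tfrac12 \expec|L(T) - 1| \leq \tfrac12 \expec[\sup_t |L(t)-1|]$, requirement (2) of Theorem \ref{thm: help}.

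For the sup-bound, the key step is to control the "drift-adjusted" variation of $L$. Since $M$ is purely discontinuous, the process $V \dfn \int_{(0,\cdot]\times\Real} |Y(t,x) - 1|\, \nu[\ud t, \ud x]$ is predictable, nondecreasing, and by \eqref{eq: pre-hellinger} satisfies $V(T) \leq \epsilon/2$ identically. I would then write $L - 1 = \int_0^\cdot L_{t-}\, \ud M(t)$ and split $M = \widetilde{M} + (-\Delta A)$ conceptually, but cleaner is to use the elementary estimate for stochastic exponentials: $|L(t) - 1| \leq L^*(t-) \cdot (\text{total variation contributions})$. Concretely, from $L = L_- \cdot (1 + \Delta M)$ at jump times and the compensator structure, one gets
\[
\sup_{s \leq t} |L(s) - 1| \;\leq\; \int_{(0,t]\times\Real} L_{s-}\,|Y(s,x) - 1|\,\mu[\ud s, \ud x] \;+\; \int_{(0,t]\times\Real} L_{s-}\,|Y(s,x)-1|\,\nu[\ud s,\ud x],
\]
and taking expectations, both terms have the same $\nu$-compensator, so $\expec[\sup_{s\leq t}|L(s)-1|] \leq 2\,\expec\big[\int_{(0,t]} L_{s-}\, \ud V(s)\big]$. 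Now since $V$ is predictable with $V(T) \leq \epsilon/2 < 1/2$, a stochastic-Gronwall / stochastic-integration-by-parts argument (integrating $\ud V$ against the supermartingale-like bound on $L_-$) yields $\expec[L^*(T)] \leq \text{const}$ and then $\expec[\sup_t |L(t) - 1|] \leq \epsilon$ after absorbing constants using $\epsilon < 1$; alternatively one iterates the inequality $\expec[L^*(t)] \leq 1 + \expec[\int_0^t L^*_{s-}\,\ud(2V)(s)]$ and applies Gronwall's lemma for Stieltjes integrals to get $\expec[L^*(T)] \leq \e^{2V(T)} \leq \e$, whence the bound follows with a small refinement of the constant via \eqref{eq: pre-hellinger}.

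The main obstacle I anticipate is getting the constant to be exactly $\epsilon$ rather than a generic $C\epsilon$: the naive Gronwall bound $\e^{2V(T)} - 1 \leq \e^\epsilon - 1$ is of order $\epsilon$ but not $\leq \epsilon$. The fix — and I suspect this is why $\eta$ carries the factor $|1+G|^{-2}$ rather than just a constant — is to run the estimate pathwise in $G$: one has the sharper inequality $\expec[\sup_{s\le t}|L(s)-1|]$ controlled by $\int_0^t \eta(s)\,\ud G(s)$ against a martingale, and since $\int_{\Real_+} \eta\, \ud G = \tfrac{\epsilon}{2}\int \ud G/|1+G|^2 \leq \tfrac{\epsilon}{2}$, one should localize and use that $L$ itself is a nonnegative local martingale (so $\expec[L_{s-}] \leq 1$ along a localizing sequence, with Fatou to pass to the limit) to keep the prefactor at $1$, delivering $\expec[\sup_t|L(t)-1|] \leq 2 \cdot \tfrac{\epsilon}{2} = \epsilon$ cleanly. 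So the order of operations is: (i) $\prob[L(T)>0]=1$ from $\Delta M > -1$; (ii) pathwise bound $\sup_s|L(s)-1| \leq \int L_{s-}|Y-1|(\mu+\nu)$; (iii) take expectations, use the common compensator and $\expec[L_{s-}] \leq 1$ (via localization + Fatou) together with \eqref{eq: pre-hellinger} to get $\expec[\sup|L-1|] \leq \epsilon$; (iv) conclude UI, that $\prob^Y$ is a well-defined probability equivalent to $\prob$ on $\F(T)$ (equivalence from $L(T) > 0$ and $L > 0$ being bounded below on each compact... more precisely from $L(T)>0$ a.s. together with $L$ being a UI martingale with $\expec[L(T)] = 1$), and $\normTV{\prob^Y - \prob} = \tfrac12\expec|L(T)-1| \leq \epsilon$.
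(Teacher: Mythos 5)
Your overall strategy is the paper's: observe that $M$ is a purely discontinuous local martingale of finite variation with $\Delta M>-1$ (whence $\prob[L(T)>0]=1$), write $L-1=\int_0^\cdot L(t-)\,\ud M(t)$ to get the pathwise bound $\sup_{s\le t}|L(s)-1|\le \int L(s-)|Y-1|\,\ud\mu+\int L(s-)|Y-1|\,\ud\nu$, use that the two integrals share the compensator $F=\int|Y-1|\,\ud\nu$ with $F(\infty)\le\epsilon/2$ by \eqref{eq: pre-hellinger} to reduce everything to $2\,\expec\bra{\int_{(0,\infty)} L(t-)\,\ud F(t)}$, and then conclude uniform integrability, equivalence of $\prob^Y$ and the total-variation estimate. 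Steps (i), (ii) and (iv) of your outline are fine and coincide with the paper.

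The one step that is not a proof as written is (iii). The inequality $\expec\bra{\int_{(0,T]} L(t-)\,\ud F(t)}\le \epsilon/2$ does not follow from ``$\expec[L(t-)]\le 1$ plus localization and Fatou'': $F$ is a \emph{random} (predictable) increasing process, correlated with $L_-$, so there is no Fubini-type interchange of $\expec$ with $\int\cdot\,\ud F$; a nonnegative local martingale can sit at large values exactly where $F$ places its mass, and the bound with prefactor $1$ is true but not for that reason. The same objection defeats your Gronwall variant: once you take expectations you would need a deterministic integrator, which $F$ (or $\int \eta\,\ud G$) is not. The correct device --- and it is exactly what the paper uses --- is integration by parts on $(0,\tau_n]$ along a localizing sequence $(\tau_n)_{n\in\Natural}$ for $L$: $\int_{(0,\tau_n]}L(t-)\,\ud F(t)=L(\tau_n)F(\tau_n)-\int_{(0,\tau_n]}F(t)\,\ud L(t)$, where the integral of the bounded predictable-type integrand against the localized martingale has zero expectation, and $\expec\bra{L(\tau_n)F(\tau_n)}\le(\epsilon/2)\,\expec\bra{L(\tau_n)}\le\epsilon/2$; monotone convergence in $n$ then gives $\expec\bra{\sup_t|L(t)-1|}\le\epsilon$. (An alternative repair is a time change through the predictable stopping times $\inf\{t: F(t)>u\}$, for which $\expec[L(\tau-)]\le1$ does hold.) You do name ``stochastic-integration-by-parts'' in passing, so the idea is within reach, but your final chain of deductions rests on the unjustified interchange and needs to be replaced by the argument above.
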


\begin{proof}
Since $\Delta M > -1$ and $M$ is a local martingale, $\prob [L(T) > 0] = 1$ follows.

Let $H \dfn \int_{(0^\cdot]} |Y(t, x) - 1| \mu [\ud t, \ud x]$ and $F \dfn \int_{(0^\cdot]} |Y(t, x) - 1| \nu [\ud t, \ud x]$. The process $F$ is the predictable compensator of $H$ and we have $\prob \bra{F (\infty) \leq \epsilon/2} = 1$ in view of \eqref{eq: pre-hellinger}. In particular, $M$ is a local martingale of finite variation.
%Now, observe that $M$ is actually a martingale with $\expec \bra{\sup_{t \in \Real_+} M_t} < \infty$. Indeed, this follows from the fact that
%\[
%\expec \bra{\sup_{t \in \Real_+} M_t} \leq \expec \bra{H_\infty + F_\infty} = 2 \expec \bra{F_\infty} \leq \frac{\epsilon^2}{4}.
%\] 
Using the fact that $L = 1 + \int_{(0, \cdot]} L (t -) \ud M (t)$, we obtain
\[
\expec \bra{\sup_{t \in \Real_+} \abs{L(t) - 1}} \leq \expec \bra{\int_{(0, \infty)} L (t-) \ud H (t) + \int_{(0, \infty)} L (t-) \ud F (t)} = 2 \expec \bra{\int_{(0, \infty)} L (t-) \ud F (t)}.
\] 
Furthermore, with $(\tau_n)_{n \in \Natural}$ being a localizing sequence for $L$, we have
\[
\expec \bra{\int_{(0, \tau_n]} L (t-) \ud F (t)} = \expec \bra{L (\tau_n) F (\tau_n)} - \expec \bra{\int_{(0, \tau_n]} F (t) \ud L (t)} \leq \frac{\epsilon}{2} \expec \bra{L (\tau_n)} \leq \frac{\epsilon}{2}.
\]
As the previous is valid for all $n \in \Natural$, $\expec \bra{\sup_{t \in \Real_+} \abs{L (t) - 1}} \leq \epsilon$ follows from a straightforward application of the monotone convergence theorem. In particular, $\expec \bra{\sup_{t \in \Real_+} \abs{L (t)}} < \infty$ which implies that $L$ is a uniformly integrable martingale and, therefore, $\prob^Y$ is well defined and equivalent to $\prob$ on $\F(T)$.  Furthermore, $\normTV{\prob^Y - \prob} = \expec \bra{\abs{L(T) - 1}} \leq \epsilon$, which completes the proof.
\end{proof}

%Continuing, \eqref{eq: pre-hellinger} coupled with the inequality $\abs{\sqrt{w} - 1}^2 \leq |w - 1|$, valid for all $w \in \Real_+$, gives
%\begin{equation} \label{eq: hellinger}
%  \int_{\Real_+ \times \Real} \abs{ \sqrt{Y(t, x)} - 1 }^2 \nu[\udw t, \udw x] \leq \frac{\epsilon^2}{8}.
%\end{equation}
%Inequality \eqref{eq: hellinger} in particular implies that
%\[
%\Delta \pare{\int_{(0, \cdot] \times \Real} \abs{ \sqrt{Y (t, x)} - 1 }^2  \nu[\udw t, \udw x]} = \pare{\int_{\Real} \abs{ \sqrt{Y (x)} - 1 }^2  \kappa [\udw x]} \Delta G \leq \frac{\epsilon^2 \Delta G}{8 |1 + G|^2} \leq 1,
%\]
%where recall that $0 < \epsilon < 1$. Therefore, combining the estimate \eqref{eq: hellinger}, \cite[IV.1.39, page 237]{MR1943877} and \cite[V.4.22, page 315]{MR1943877}, we obtain
%\[
%\normTV{\prob^Y - \prob} \ \leq \ 4 \sqrt{\expec \bra{\frac{1}{2} \int_{(0, T] \times \Real} \abs{ \sqrt{Y (t, x)} - 1 }^2  \nu[\udw t, \udw x] }} \ \leq \ 4 \sqrt{\frac{\epsilon^2}{16}} \ = \ \epsilon.
%\]
%It follows that any $Y$ satisfying (Y1), (Y2) and (Y3) generates a probability $\prob^Y$ satisfying requirements (1) and (2) of Theorem \ref{thm: help}. We shall soon see how to choose $Y$ so that requirement (3) of Theorem \ref{thm: help} is also satisfied.

Consider the probability $\proby$ of Lemma \ref{lem: hellinger}. According to Girsanov's Theorem (Theorem III.3.24, page 172 of \cite{MR1943877}),  under assumptions (Y1), (Y2) and (Y3) on $Y$, $S$ is still a special semimartingale under $\proby$ with canonical decomposition $S = S_0 + A^Y + S^{\co, Y}  + \int_{(0, \cdot] \times \Real} x (\mu[\ud t, \ud x] - \nu^Y[\ud t, \ud x])$, where the predictable compensator $\nu^Y$ of $\mu$ under $\proby$ was defined previously in \eqref{eq: nuY}, and where $A^Y = \int_{(0, \, \cdot]} a^Y(t) \ud G(t)$, with $a^Y \dfn a + \int_\Real x (Y(x) -1 ) \, \kappa [\udw x]$. For the continuous local $\prob^Y$-martingale part $S^{\co, Y}$ we have $C^Y := [S^{\co, Y}, S^{\co, Y}] = [S^{\co}, S^{\co}] = C$, i.e., $C^Y = \int_{(0, \, \cdot]} c^Y(t) \ud G(t)$ with $c^Y = c$.

\subsubsection{Relative rate of return} \label{subsubsec: rrr}

Remember that $Y$ always denotes a strictly positive predictable random field satisfying (Y1), (Y2), and (Y3) of \S\ref{subsubsec: change of prob}. We aim at understanding what extra condition must $Y$ satisfy in order for $\tprob \equiv \proby$ to satisfy all the requirements of Theorem \ref{thm: help}.

Define a pair of processes $(\ell, r)$ via
\[
\ell \dfn \inf \set{p \in \Real \such \kappa[\set{x \in \Real \such 1 + p x < 0}] = 0} \text{ and } r \dfn \sup \set{p \in \Real \such \kappa[\set{x \in \Real \such 1 + p x < 0}] = 0}.
\]
($\ell$ and $r$ are mnemonics for ``left'' and ``right'' respectively.) It is straightforward that $\ell \leq 0 \leq r$, as well as that both $\ell$ and $r$ are predictable: for example, $\set{\ell \leq p} = \Omega \times \Real_+$ if $p \in \Real_+$, while
\[
\set{\ell \leq p} = \bigcap_{n \in \Natural} \Big\{ \kappa[\set{x \in \Real \such 1 + (p + 1/n ) x < 0}] = 0 \Big\} \text{ if } p \in \Real \setminus \Real_+;
\]
in both cases, $\set{\ell \leq p}$ is predictable. Of course, nothing changes in the definition of $\ell$ and $r$ if we replace $\kappa$ with $\kappa^Y$. Define $I := [\ell, r] \cap \Real$.
%Observe that for any closed and bounded interval $J = [\inf J, \sup J] \subseteq \Real$, we have $\set{I \cap J = \emptyset} = \set{\sup J < \ell} \cup \set{r < \inf J} \in \Pre$; in other words, $I$ is a predictable process taking values in the closed subintervals of $\Real$ containing $\set{0}$.
Note that  $\cosupp(\kappa) = [-1/r, -1/ \ell] \cap \Real$, where ``$\cosupp$'' denotes the convex hull of the support of a measure.

For two $I$-valued predictable processes $p$ and $p'$, define a predictable process
\begin{equation} \label{eq: rel_perf}
\rely(p \such p') \dfn (p - p') \pare{a^Y - p' c^Y  -  \int_{\Real} \frac{p' |x|^2}{1 + p'x} \, \kappa^Y [\ud x] }.
\end{equation}
The last expression is closely related to the \emph{relative rate of return} of wealth processes in $\X_{++}$, as the proof of the following result reveals.
\begin{lem} \label{lem: rrr}
%Assume \NAone.
Suppose that $Y$ is a strictly positive predictable random field satisfying \emph{(Y1)}, \emph{(Y2)}, and \emph{(Y3)}. Further, suppose that $\tp$ is an $I$-valued predictable, $S$-integrable process such that $\rely(p \such \tp) = 0$ holds for all other $I$-valued predictable processes $p$. Define $\tX \dfn \Exp(\int_0^\cdot \tp(t) \ud S(t))$. Then, $\tX_0 = 1$, $\tX \in \X_{++}$, and $X / \tX$ is a local $\proby$-martingale for all $X \in \X$.
\end{lem}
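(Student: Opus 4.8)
The plan is to work entirely under $\proby$ and show that, writing $\tX = \Exp(\int_0^\cdot \tp \,\ud S)$, the process $\tX$ is a well-defined member of $\X_{++}$ and that for every $X \in \X$ the ratio $X/\tX$ is a local $\proby$-martingale. First I would verify $\tX \in \X_{++}$: the stochastic exponential $\Exp(\int_0^\cdot \tp\, \ud S)$ jumps by $\tp(t)\Delta S(t)$, and since $\tp$ is $I$-valued with $I = [\ell, r]\cap \Real$, the defining property of $\ell$ and $r$ (which is unchanged when $\kappa$ is replaced by $\kappa^Y$) guarantees $1 + \tp(t)x > 0$ for $\kappa(t)$-a.e.\ $x$, hence $1 + \tp(t)\Delta S(t) > 0$ identically and $\Delta \tX > -\tX_-$. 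A small extra argument (or an appeal to the fact that $\Lambda$ is $(\prob\otimes G)$-null via Lemma \ref{lem: consequences of na1}) rules out $\tX$ or $\tX_-$ hitting zero on the ``degenerate'' set where $\kappa$ has an atom at $-1/\ell$ or $-1/r$; I would need to check that the optimality condition $\rely(p\such \tp)=0$ forces $\tp$ off the boundary precisely when such an atom is present, so that $1 + \tp\Delta S > 0$ strictly. This gives $\tX > 0$ and $\tX_- > 0$, so $\tX \in \X_{++}$ and $\tX_0 = 1$.

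Next, the core computation. Fix $X = X^{x,\vartheta} \in \X$ and write $X = X_- \cdot (x_0 + \int \pi\, \ud S)$ in multiplicative form where $\pi = \vartheta/X_-$ on $\{X_- > 0\}$ (handling the absorption at $0$ separately, since once $X$ hits $0$ it stays there and $X/\tX \equiv 0$ is trivially a local martingale). Using Yor's formula / the stochastic-exponential calculus, $X/\tX = (X/\tX)_- \cdot \Exp(N)$ for a suitable semimartingale $N$ whose drift I would compute explicitly from the $\proby$-canonical decomposition $S = S_0 + A^Y + S^{\co,Y} + \int x(\mu - \nu^Y)$. The drift of $N$ — equivalently the predictable finite-variation part of the $\proby$-special semimartingale $X/\tX$, after normalizing — works out (this is the routine but lengthy part I will not grind through) to be an integral against $\ud G$ of
\[
(\pi - \tp)\Bigl(a^Y - \tp\, c^Y - \int_\Real \frac{\tp |x|^2}{1 + \tp x}\,\kappa^Y[\ud x]\Bigr) = \rely(\pi \such \tp),
\]
which is exactly the expression in \eqref{eq: rel_perf}. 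Here the term $-\tp c^Y$ comes from the continuous covariation between $\int \pi \,\ud S^{\co,Y}$ and $1/\tX^{\co}$, and the integral term comes from expanding $(1 + \pi x)/(1+\tp x) - 1 - (\pi - \tp)x$ against $\nu^Y$, i.e.\ from the jump compensator of the ratio. By hypothesis $\rely(p \such \tp) = 0$ for \emph{all} $I$-valued predictable $p$, and $\pi$ is $I$-valued on $\{X_- > 0\}$ (again because $X \geq 0$ forces $1 + \pi x \geq 0$ $\kappa$-a.e.), so the drift vanishes and $X/\tX$ is a local $\proby$-martingale.

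The main obstacle is the drift computation itself together with the integrability bookkeeping: one must justify that all the integrals appearing ($\int \frac{\tp|x|^2}{1+\tp x}\kappa^Y[\ud x]$, the covariation terms, etc.) are finite $(\proby\otimes G)$-a.e., which uses (Y1) and the special-semimartingale property $\int(|x|\wedge|x|^2)\,\nu^Y < \infty$, and that the localization producing the local-martingale (rather than merely ``drift-free formal'') conclusion is legitimate. A secondary subtlety is the behaviour of $\pi$ on the boundary of $I$ and near the absorption time of $X$ at $0$; I would dispose of these by the standard localization/absorption arguments and by invoking Lemma \ref{lem: consequences of na1} to exclude the pathological predictable set where the drift expression would be ill-defined.
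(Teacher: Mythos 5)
Your core computation coincides with the paper's: for a strictly positive wealth process $X = x\,\Exp(\int_0^\cdot \pi(t)\,\ud S(t))$ one writes $X/\tX = x\,\Exp(R)$ for an explicit semimartingale $R$ whose predictable finite-variation part is $\rely(\pi \such \tp)$ integrated against $\ud G$, so the hypothesis kills the drift, and nonnegativity upgrades the resulting sigma $\proby$-martingale to a local $\proby$-martingale; likewise, your way of getting $\tX \in \X_{++}$ (the well-posedness of $\rely(0 \such \tp) = 0$ forces $\kappa^Y[\set{x \such \tp x = -1}] = 0$, hence $1 + \tp\,\Delta S > 0$) is exactly the paper's argument. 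The genuine gap is your reduction of general $X \in \X$ to the strictly positive case. You assert that a nonnegative wealth process, once it hits zero, is absorbed there, and then set $\pi = \vartheta / X_-$ only on $\set{X_- > 0}$. Absorption at zero is \emph{not} a general property of nonnegative stochastic integrals: take $S(t) = t$, initial capital $0$ and $\vartheta \equiv 1$, which gives a nonnegative wealth process starting at zero and immediately becoming positive. Absorption is normally \emph{deduced} from the existence of a (super)martingale deflator, i.e.\ from a consequence of the very lemma you are proving, so as written this step is either circular or missing an argument; and even granted absorption, you would still need to patch the local martingale property of $X/\tX$ across the absorption time.

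The paper avoids all of this with one observation: if $X = X^{x,\vartheta} \in \X$, then $1 + X = X^{x+1,\vartheta} \in \X_{++}$, so by the strictly positive case both $(1+X)/\tX$ and $1/\tX$ are local $\proby$-martingales, and $X/\tX$ is their difference. Replacing your absorption claim by this remark closes the gap with no extra work. (Your other flagged points --- integrability of $\int_\Real \frac{\tp |x|^2}{1+\tp x}\,\kappa^Y[\ud x]$ via (Y1) and the special semimartingale property, and the passage from ``vanishing drift'' to a genuine local martingale through the sigma-martingale/Ansel--Stricker step for nonnegative processes --- are precisely the points the paper handles, and your treatment of them is sound in outline.)
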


\begin{proof}
Since $\tp$ is $S$-integrable, $\tX$ is well defined. In view of \eqref{eq: rel_perf}, the fact that $\rel(0\such \tp) = 0$ implies that $\kappa^Y[\set{x \in \Real \such \tp x = -1}] = 0$. Therefore, $\tp \Delta S > -1$, i.e., $\tX > 0$ and $\tX_- > 0$ hold. With $\tvartheta \dfn \tp \tX_-$, we have $\tX = X^{1, \tvartheta}$ in the notation of \S \ref{eq: wealth process, all}. Therefore, $\tX \in \X_{++}$.

Pick any $X = X^{x, \vartheta} \in \X_{++}$. Let $p \dfn \vartheta / X_-$; then, $X = x \Exp(\int_0^\cdot p(t) \ud S(t))$. We shall show that
\[
\frac{X}{\tX} = x \, \frac{\Exp(\int_0^\cdot p(t) \ud S(t))}{\Exp(\int_0^\cdot \tp(t) \ud S(t))}
\]
is a local $\proby$-martingale. Since $X > 0$, $X_- > 0$, $\tX > 0$, and $\tX_- > 0$ hold, it follows that we can write $X / \tX = x \Exp(R^{p \such \tp})$ for some semimartingale $R^{p \such \tp}$ with $\Delta R^{p \such \tp} > -1$. In fact,
\[
R^{p  \such \tp} = \int_0^\cdot  \pare{p(t) - \tp(t)} \ud S(t) - \int_0^\cdot \pare{p(t) - \tp(t)} \tp(t) \ud [S^\co, S^\co] (t) - \sum_{t \leq \cdot} \frac{\pare{p(t) - \tp(t)} \tp(t) |\Delta S(t)|^2}{1 +\tp(t) \Delta S(t)};
\]
indeed, using Yor's formula it can be easily checked that
\begin{align*}
\Exp \pare{ \int_0^\cdot \tp(t) \ud S(t) } \Exp \pare{ R^{p  \such \tp} } &= \Exp \pare{ \int_0^\cdot \tp(t) \ud S(t) + R^{p  \such \tp} + \bra{\int_0^\cdot \tp(t) \ud S(t), R^{p  \such \tp}}} \\
&= \ldots = \Exp \pare{ \int_0^\cdot p(t) \ud S(t) }.
\end{align*}
By a comparison of \eqref{eq: rel_perf} with the formula for $R^{p  \such \tp}$ above, $\rely(p \such \tp) = 0$ implies that $R^{p  \such \tp}$ is a sigma $\proby$-martingale. (For information and properties of sigma-martingales, the reader is referred to \cite{MR2013413}.) Since $X / \tX = x \Exp(R^{p  \such \tp})$, it follows that $X / \tX$ is a sigma $\proby$-martingale. For nonnegative processes, the sigma martingale property is equivalent to the local martingale property; therefore, we conclude that $X / \tX$ is a local $\proby$-martingale. 

Now, let $X \in \X$. Since $(1 + X) \in \X_{++}$, the discussion of the previous paragraph implies that $(1 + X) / \tX$ is a local $\proby$-martingale. Again, by the discussion of the previous paragraph, $1 / \tX$ is a local $\proby$-martingale. It follows that $X / \tX$ is a local $\proby$-martingale.
\end{proof}

%The next result, which we shall not prove, is a consequence of Theorems 3.15 and 4.12 in \cite{MR2335830}. One just has to use the fact that condition NA$_1$ is equivalent to condition NUPBR of \cite{MR2335830}.
%\begin{thm} \label{thm: exist of num}
%Assume \NAone. Let $Y$ be any strictly positive predictable random field such that (Y1) and (Y2) hold, and let $\proby$ be the induced probability. Then, there exists an $I$-valued, $S$-integrable predictable process $\hp^Y$ such that $\rely(p \such \hp^Y) \leq 0$ for any other $I$-valued predictable process $p$.
%\end{thm}

In view of Lemma \ref{lem: rrr}, Theorem \ref{thm: help} will be proved if we can find a strictly positive predictable random field $Y$ satisfying (Y1), (Y2) and (Y3), as well as an $I$-valued predictable, $S$-integrable process $\tp^Y$ such that $\rely(p \such \tp^Y) = 0$ holds for any other $I$-valued predictable process $p$. In \S \ref{subsubsec: growth rates}, we shall see how $\tp^Y$ should be picked, given a strictly positive predictable random field $Y$ satisfying (Y1), (Y2) and (Y3); then, in \S \ref{subsubsec: construction of density}, we shall construct the appropriate strictly positive predictable random field.

\subsubsection{Growth rates} \label{subsubsec: growth rates}

In order to understand how $Y$ has to be picked, we shall use the fact that the relative rate of return is essentially the directional derivative of the growth rate. In more detail, define a predictable random field $\gy$ via $\gy (\p) \dfn \p a^Y - (1/2) c^Y |\p|^2 - \int_{\Real} \pare{\p  x - \log(1 + \p x)}\, \kappa^Y [\udw x]$ for $\p \in I$, and set $\gy (\p) = - \infty$ when $\p \notin I$. The assumption $\int_{\Real} \pare{|x| \wedge |x|^2} \kappa^Y [\udw x] < \infty$ ensures that $\g$ is well-defined and finite in the interior of $I$, thought it might be the case that $\gy(\ell) = - \infty$ or $\gy(r) = - \infty$. It is obvious that for fixed $(\omega, t) \in \Omega \times \Real_+$, $\gy(\omega, t, \cdot) : \Real \mapsto \Real \cup \set{- \infty}$ is a concave function. With all set-inclusions involving subsets of $\Omega \times \Real_+$ from now on to be understood in a $(\prob \otimes G)$-a.e. sense, an application of Lemma \eqref{lem: consequences of na1} (with $a^Y$ and $\kappa^Y$ replacing $a$ and $\kappa$ there respectively) gives $\set{ r = \infty}= \set{\kappa^Y [(-\infty, 0)] = 0} \subseteq  \set{\lim_{\p \to \infty} \gy(\p) \leq 0}$. Indeed, $\set{\kappa^Y [(-\infty, 0)] = 0, \ c > 0} \subseteq \set{\lim_{\p \to \infty} \gy(\p) = - \infty}$, while $\set{\kappa^Y [(-\infty, 0)] = 0, \ c = 0} \subseteq \set{\lim_{\p \to \infty} \gy(\p) = a - \int_{(0, \infty)} x \kappa [\ud x]}$. Similarly, one can show that $\set{\ell = - \infty} \subseteq \set{\lim_{\p \to - \infty} \gy(\p) \geq 0}$. Since $\gy(0) = 0$, it follows that $\gy$ always achieves its supremum at some point in $I$.

Define now the ``derivative'' predictable random field $\ngoy : \Omega \times \Real_+ \times \Real \mapsto \Real \cup \set{- \infty, \infty}$ via
\begin{equation} \label{eq: growth der}
\ngoy (\p) \dfn a^Y - \p c^Y - \int_{\Real} \frac{\p |x|^2}{1 + \p x} \, \kappa^Y [\udw x] \, = \, \ngo(\p) + \int_\Real \frac{x}{1 + \p x}  \pare{Y(x) -1}  \kappa [\udw x],
\end{equation}
for $\p \in I$ (where $\ngo \equiv \ngo^1$), $\ngoy (\p) = \ngoy (\ell)$ for $\p < \ell$, and similarly $\ngoy (\p) = \ngoy (r)$ for $\p > r$. The concavity of $\gy$ and straightforward applications of the dominated convergence theorem imply that, for fixed $(\omega, t) \in \Omega \times \Real_+$, $\ngo$ is nonincreasing and continuous on $I$. Note that on $\set{\ell = 0 = r} = \set{\supp(\kappa) = \Real}$ it is impossible to define $\ngo$. In this case, we simply force $\ngoy(\p) = 0$ for all $\p \in \Real$; we shall see later how such convention is useful.

Define a process $\tp^Y \dfn \inf \set{\p \in I \such \ngoy (\p) \leq 0 }$, where we set $\tp^Y = r$ in case the last set is empty and $\tp^Y = 0$ on $\set{\ngoy (\ell) = 0 = \ngoy(r)}$. It is clear that $\tp^Y$ is a predictable process. Furthermore, on $\{ \ngoy (\ell) \geq 0, \, \ngoy (r) \leq 0 \}$, which is a predictable set, we have $\ngoy (\tp^Y) = 0$, and, therefore, $\rely(p \such \tp^Y) = (p - \tp^Y) \ngoy (\tp^Y) = 0$ for all $I$-valued predictable processes $p$.

The point of the above discussion is the following: Suppose that for some strictly positive predictable random field $Y$ satisfying (Y1), (Y2) and (Y3), both $\ngoy (\ell) \geq 0$ and $\ngoy (r) \leq 0$ hold for all $(\omega, t) \in \Omega \times \Real_+$, which as usual will be suppressed from notation in the sequel. Then, we can construct a predictable $I$-valued process $\tp^Y$ such that $\rely(p \such \tp^Y) = (p - \tp^Y) \ngoy (\tp^Y) = 0$ for all $I$-valued predictable processes $p$. (Observe how $\rely(p \such \tp^Y) = (p - \tp^Y) \ngoy (\tp^Y) = 0$ trivially also holds on $\set{\ell = 0 = r} = \set{\supp(\kappa) = \Real}$ in view of our convention, as $I = \set{0}$.) In view of Lemma \ref{lem: rrr}, Theorem \ref{thm: help} will follow as soon as we know that $\tp^Y$ is $S$-integrable. Luckily, this is \emph{always} the case under condition NA$_1$. The proof of this fact is quite technical, and basically follows the treatment in \cite[Section 8]{MR2335830}, where Proposition 4.16 of the latter paper is proved. We shall, however, provide some details for completeness. In view of \cite[Corollary 3.6.10, page 128]{MR1906715}, failure of $S$-integrability of $\tp^Y$ implies that there exist a sequence of $[0,1]$-valued predictable processes $(h_n)_{n \in \Natural}$, such that each $h_n \tp^Y$, $n \in \Natural$, is $S$-integrable and the sequence of terminal values $\pare{\int_0^T \pare{h_n(t) \tp^Y (t)} \ud S(t)}_{n \in \Natural}$ fails to be bounded in probability. (Note that, a priori, the previous sequence can fail to be bounded in probability either from above or below, or even from both sides.) For each $n \in \Natural$, define $X_n \in \X_{++}$ with $X_n(0) = 1$ via
\[
X_n \dfn \Exp \pare{\int_0^\cdot \pare{h_n(t) \tp^Y (t)} \ud S(t)}.
\]
Since $h_n$ is $[0,1]$-valued, the definition of $\tp^Y$ implies that $\rely(0 \,|\, h_n \tp^Y) \leq 0$. (This follows because the predictable function $[0,1] \ni u \mapsto \g(u \tp^Y)$ is nondecreasing.) Therefore, $1 / X_n$ is a nonnegative $\prob$-supermartingale for all $n \in \Natural$. Then, it follows from \cite[Lemma 8.1]{MR2335830} that failure of boundedness in probability of $\pare{\int_0^T \pare{h_n(t) \tp^Y (t)} \ud S(t)}_{n \in \Natural}$ also implies failure of boundedness in probability of the sequence $(X_n(T))_{n \in \Natural}$. (Although intuitively plausible, passing from failure of boundedness in probability of processes to failure of boundedness in probability of their stochastic exponentials is not always possible, because the stochastic exponential is not a monotone operator. The fact that this can be done in the present case is due to the fact that each process $1 / X_n$ is a nonnegative $\prob$-supermartingale --- see also \cite[Remark 8.2]{MR2335830}.) However, condition NA$_1$ is equivalent to the requirement that the set $\set{X(T) \such X \in \X \text{ with } X(0) = 1}$ is bounded in probability, making it impossible for $(X_n(T))_{n \in \Natural}$ to fail to be bounded in probability. We conclude that $\tp^Y$ is $S$-integrable under the validity of condition NA$_1$.

\subsubsection{Construction of the appropriate predictable random field} \label{subsubsec: construction of density}

We now move to the most technical part of the proof of Theorem \ref{thm: help}, by constructing a strictly positive predictable random field $Y$ satisfying (Y1), (Y2), and (Y3), as well as the following condition:
\begin{enumerate}
  \item[(Y4)] $\ngo^Y (\ell) \geq 0$ and $\ngo^Y (r) \leq 0$.
\end{enumerate}
(Note that the last condition is always trivially satisfied on $\set{\ell = 0 = r} = \set{\supp(\kappa) = \Real}$.) From the discussion of \S \ref{subsubsec: rrr} and \S \ref{subsubsec: growth rates}, existence of such a strictly positive predictable random field $Y$ will complete the proof of Theorem \ref{thm: help}.

The strictly positive predictable random field $Y$ will actually depend on the predictable processes $(a, \kappa, \eta)$ and will have to be defined differently on each of nine predictable sets $(P_i)_{i=1, \ldots, 9}$ that constitute a partition of $\Omega \times \Real_+$. (By construction, it will be immediately clear that $Y$ is actually a predictable random field.) On each of these predictable sets we shall show that (Y1) to (Y4) are valid. The reader will notice how the one-dimensional structure of the asset-price process is used in a non-trivial way when defining $Y$. The method certainly does not generalize for the case of multiple assets --- it appears a big challenge to provide a proof in a multi-dimensional setting.

Before we delve into the technicalities of the proof, recall that under condition NA$_1$, any strictly positive predictable random field $Y$ satisfying (Y1), (Y2) and (Y3) is such that $\set{\ell = - \infty} \subseteq \set{\ngoy (\ell) \geq 0}$ and $\set{r = \infty} \subseteq \set{\ngoy (r) \leq 0}$. This is true in view of Lemma \ref{lem: consequences of na1} --- see also the discussion in \S \ref{subsubsec: growth rates}.

\smallskip

\noindent $\bullet$ We start with the set $P_1 \dfn \set{\ell = 0, \, r = \infty}$. (All the predictable-set inclusions below are understood to hold on $P_1$, until we move to the next case where they will be understood to hold on $P_2$, and so forth.) Here, $\ngo(\ell) = \ngo(0) = a$. Since, as explained above, $\set{r = \infty} \subseteq \set{\ngo^Y (r) \leq 0}$, we only have to carefully define $Y$ on $\set{a < 0}$. Notice that $\set{\ell = 0, r = \infty} = \set{\cosupp(\kappa) = [0,\infty)}$, and define $Y_1 \dfn y_1(a, \kappa, \eta)$, where, with
\[
\delta \dfn 1 + \frac{4}{\kappa[\Real]} + \inf \set{x \in \Real \
 \Big| \ \kappa[(0, x]] \geq \frac{\kappa[\Real]}{2} } \text{ and } b \dfn \abs{\delta - a + \frac{2}{\eta}}^2,
\]
we set
\[
y_1(a, \kappa, \eta; \, x) \dfn 1 + \pare{\frac{1}{\sqrt{b} \, \kappa \bra{ (b,  \infty)}} \indic_{( b, \, \infty)} (x)
- \frac{1}{\sqrt{b} \, \kappa \bra{ (0, \delta]}} \indic_{(0, \delta]} (x) } \indic_{\set{a < 0}} \text{ for } x \in \Real,
\]
(In the definition of $y_1(a, \kappa, \eta)$, the term $1 / ( \sqrt{b} \, \kappa \bra{ (0, \delta]} )$  is understood to be zero on $\set{\kappa[\Real] = \infty}$.) We shall show below that $Y_1$ satisfies (Y1) through (Y5). On $\set{a \geq 0}$ this is trivial, since $Y_1 = 1$. Therefore, focus will be given only on $\set{a < 0}$ below. First of all, it is easy to see that $Y_1 \geq 1/2$. Indeed, on $\set{\kappa[\Real] = \infty }$ we have $Y_1 \geq 1$; also, on $\set{\kappa[\Real] < \infty}$,
\[
\sqrt{b} \, \kappa \bra{ (0, \delta]} > \delta \kappa \bra{ (0, \delta]} > \frac{4}{\kappa[\Real]} \, \frac{\kappa[\Real]}{2} = 2
\]
holds from the definition of $\delta$. Proceeding, the fact that $Y_1$ is bounded from above coupled with $\int_\Real \pare{|x| \wedge |x|^2} \, \kappa[\udw x] < \infty$ implies $\int_\Real \pare{|x| \wedge |x|^2} Y_1 (x) \, \kappa[\udw x] < \infty$. For the estimate of the distance between $\kappa$ and $\kappa^{Y_1}$ observe that
\[
\int_\Real |Y_1(x) - 1| \, \kappa[\udw x] \leq  \frac{2}{\sqrt{b}} \leq  \frac{2}{ 2 / \eta} = \eta.
\]
Now, on $\set{\kappa[\Real] = \infty}$ we have $Y_1 \geq 1$ and obviously $\kappa^{Y_1} [\Real] = \infty$; on the other hand, on $\set{\kappa[\Real] < \infty}$ the equality $\kappa^{Y_1} [\Real] = \kappa[\Real]$ follows in a straightforward way from the definition of $Y_1$. Finally, since $\ngo (0) = a$, use \eqref{eq: growth der} to estimate
\begin{eqnarray*}
% \nonumber to remove numbering (before each equation)
  \ngo^{Y_1} (0) &=& a + \int_{(b,  \infty)} \frac{x}{\sqrt{b} \, \kappa \bra{ (b,  \infty)}} \kappa[\udw x] - \int_{(0, \delta]} \frac{x}{\sqrt{b} \, \kappa \bra{ (0, \delta]}} \kappa[\udw x]   \\
   &\geq& a + \sqrt{b} - \frac{\delta}{\sqrt{b}} \ = \ a - a + 2 / \eta + \delta - \frac{\delta}{\delta - a + 2 / \eta} \ \geq \ 0.
\end{eqnarray*}
(The last inequality follows from $\eta > 0$ and $\delta > 1$, which imply also $\delta - a + 2 / \eta > 1$, since $a < 0$.)

\smallskip

\noindent $\bullet$ The situation on $P_2 \dfn \set{\ell = - \infty, \, r = 0}$ is symmetric to the previous one. With
\[
\delta \dfn 1 + \frac{4}{\kappa[\Real]} - \sup \set{ x \in \Real \ \Big| \ \kappa[[x, 0)] \geq \frac{\kappa[\Real]}{2} } \text{ and } b \dfn \abs{\delta + a + \frac{2}{\eta}}^2,
\]
define $Y_2 \dfn y_2(a, \kappa, \eta)$, where
\[
y_2(a, \kappa, \eta; \, x) \dfn 1 + \pare{\frac{1}{\sqrt{b} \, \kappa \bra{ ( - \infty, \, - b)}} \indic_{( - \infty, \, - b)} (x)
- \frac{1}{\sqrt{b} \, \kappa \bra{ [- \delta, 0)}} \indic_{[- \delta, 0)} (x)} \indic_{\set{a > 0}} \text{ for } x \in \Real.
\]
One can then follow the exact same steps that we carried out on $P_1$.

\smallskip

\noindent $\bullet$ We now move to the set $P_3 \dfn \set{\ell = - \infty, \, 0 < r < \infty}$, on which $\cosupp(\kappa) = [-1/r, 0]$. Since $\ell = - \infty$, we have $\ngo(\ell) \geq 0$. Also, on $\set{\kappa[\set{-1 / r}] > 0}$ we have $\g(r) = - \infty$, and $\ngo(r) = - \infty$ follows easily. Then, define $Y_3 \dfn y_3(a, \kappa, \eta)$, where, with 
\[
\beta \dfn \frac{1}{r} \min \set{ \frac{1}{2}, \, \exp \pare{- \frac{2 r}{\kappa[\Real]}}, \, \exp \pare{- \frac{2 r}{\eta}}},
\]
$y_3(a, \kappa, \eta; \, x)$ is for all $x \in \Real$ equal to
\[
1 + \pare{\frac{r}{\kappa[\Real] \log (r \beta) }
+ \indic_{(-\frac{1}{r}, \, \beta -\frac{1}{r}]} (x) \int_{x}^{\beta -\frac{1}{r}} \frac{|r|^2}{(1 + rw) \, |\log (1 + rw)|^2 \, \kappa \bra{(-\frac{1}{r}, \, w]}} \ud w} \indic_{\set{\kappa \bra{\set{-\frac{1}{r}}] = 0}}}
\]
Since $\log(r \beta) \leq - 2 r / \kappa[\Real]$, we easily get $Y_3 \geq 1 / 2 > 0$. On $\set{\kappa[\Real] = \infty}$, $Y_3 \geq 1$ and $\kappa^{Y_3}[\Real] = \infty$ trivially follows; on the other hand, on $\set{\kappa[\Real] < \infty}$, $\kappa^{Y_3} [\Real] = \kappa[\Real]$ follows as long as one notices that the double integral
\[
\int_{(-1 / r, \, \beta - 1/r]} \pare{\int_{x}^{\beta - 1/r} \frac{|r|^2}{(1 + rw) \, |\log (1 + rw)|^2 \, \kappa \bra{(-1 / r, \, w]}} \ud w} \kappa [\udw x]
\]
is, in view of Fubini's theorem, equal to
\begin{equation} \label{eq: helpful estimate}
\int_{-1 / r}^{\beta - 1/r}  \frac{|r|^2}{(1 + rw) \, |\log (1 + r w)|^2} \ud w = r \int_{0}^{r \beta}  \frac{1}{w \, |\log w|^2} \ud w = - \frac{r}{\log (r \beta)}.
\end{equation}
The above estimate also implies $\int_\Real \pare{|x| \wedge |x|^2} Y_3(x) \, \kappa[\udw x] < \infty$. Indeed, note that
\[
Y_3(x) \leq 1 + r / (\kappa[\Real] \log (r \beta))
\]
for $x \in I \setminus (-1 / r, \, \beta - 1/r]$, while, using the fact that $\beta \leq 1 / (2 r)$, we obtain
\[
\int_{(-1 / r, \, \beta - 1/r]} \pare{|x| \wedge |x|^2} Y_3(x) \, \kappa[\udw x] \leq \frac{1}{r \min \set{1, r}} \int_{(-1 / r, \, \beta - 1/r]}  Y_3(x) \, \kappa[\udw x] < \infty.
\]
For estimating the distance between $\kappa$ and $\kappa^{Y_3}$, note that
\[
\int_\Real |Y_3(x) - 1| \, \kappa[\udw x] \leq - 2 r / \log (r \beta) \leq \eta,
\]
which follows from the definition of $\beta$ and the calculations that lead to \eqref{eq: helpful estimate}. We shall now show that $\g^{Y_3} (r) = - \infty$, therefore establishing that $\ngo^{Y_3}(r) \leq 0$. Start with the observation that, for $x \in (-1 / r, \, \beta - 1/r]$, integration by parts gives
\begin{eqnarray*}
% \nonumber to remove numbering (before each equation)
  \log(1 + r x) Y_3 (x) &=& \log (r \beta) + \frac{r}{\kappa[\Real]} - \int_{x}^{\beta - 1/r} \frac{r}{1 + r w} Y_3 (w) \ud w + \\
    & & \int_{x}^{\beta - 1/r} \frac{|r|^2}{(1 + r w) \, \log (1 + r w) \, \kappa \bra{(-1 / r, \, w]}} \ud w \\
    &\leq& \frac{r}{\kappa[\Real]} + \int_{x}^{\beta - 1/r} \frac{|r|^2}{(1 + r w) \, \log (1 + r w) \, \kappa \bra{(-1 / r, \, w]}} \ud w. \\
\end{eqnarray*}
The above estimate and Fubini's theorem imply that $\int_{(-1/r, \, \beta - 1/r]} \log(1 + r x) Y_3 (x) \, \kappa[\udw x]$ is bounded from above by the quantity 
\[
\frac{r \kappa[(-1/r, \, \beta - 1/r]]}{\kappa[\Real]} + |r|^2 \int_{-1/r}^{\beta - 1/r} (1 + rw)^{-1} \, \log^{-1} (1 + rw) \ud w = - \infty.
\]
This last fact, together with \eqref{eq: growth der} and $\int_\Real \pare{|x| \wedge |x|^2} \, \kappa[\udw x] < \infty$ gives $\g^{Y_3} (r) = - \infty$. Of course, $\ngo^{Y_3} (\ell) \geq 0$ follows because $\ell = - \infty$.

\smallskip

\noindent $\bullet$ The situation on $P_4 \dfn \set{- \infty < \ell < 0, \, r = \infty}$ is symmetric to $P_3$ and, therefore, details will be omitted. Just define $Y_4 \dfn y_4 (a, \kappa, \eta)$, where, with 
\[
\beta \dfn  \frac{1}{\ell} \min \set{ \frac{1}{2} , \, \exp \pare{ \frac{2 \ell}{\kappa[\Real]}}, \, \exp \pare{ \frac{2 \ell}{\eta}}},
\]
$y_4 (a, \kappa, \eta; \, x)$ is for all $x \in \Real$ equal to
\[
1 + \pare{\frac{\ell}{\kappa[\Real] \log (\ell \beta) }
+ \indic_{(\beta - \frac{1}{\ell}, \, - \frac{1}{\ell}]} (x) \int_{\beta - \frac{1}{\ell}}^{x} \frac{|\ell|^2}{(1 + \ell w) \, |\log (1 + \ell w)|^2 \, \kappa \bra{[w, - \frac{1}{\ell})}} \ud w} \indic_{\set{\kappa \bra{\set{-\frac{1}{\ell}}] = 0}}}.
\]
\smallskip

\noindent $\bullet$ We now move to $P_5 \dfn \set{\ell = 0, \, 0 < r < \infty}$. Here, we shall use a combination of the work we carried out for $P_1$ and $P_3$. Remembering the definitions of the deterministic functionals $y_1$ and $y_3$, define
\[
Y_5 \dfn y_1 \pare{a^{y_3 (a, \kappa, \eta/2)}, \kappa^{y_3 (a, \kappa, \eta/2)}, \, \eta / 2} \, y_3 (a, \kappa, \eta/2).
\]
The definition of $Y_5$ is essentially realized in two steps. First there is a change according to $y_3$. This forces $\g^{y_3 (a, \kappa, \eta/2)} (r) = - \infty$ as on $P_3$. Also, (Y1), (Y2) and (Y3) hold, with $\eta / 2$ replacing $\eta$ in (Y2). In the second step there is a change using $y_1$. Since $y_1(a^{y_3 (a, \kappa, \eta/2)}, \kappa^{y_3 (a, \kappa, \eta/2)}, \, \eta / 2; x) = 1$ for all $x \in (- \infty, 0)$, $\g^{Y_5} (r) = - \infty$ (and, therefore, $\ngo^{Y_5}(r) \leq 0$) still holds, while now it is also the case that $\ngo^{Y_5} (\ell) \geq 0$, as was the case on $P_1$. It is clear that $Y_5 > 0$ (since both of the predictable random fields appearing in the definition of $Y_5$ are strictly positive), and that (Y1) to (Y4) all hold.

\smallskip

\noindent $\bullet$ On $P_6 \dfn \set{- \infty < \ell < 0, \, r = 0}$, define
\[
Y_6 \dfn  y_2 \pare{a^{y_4 (a, \kappa, \eta/2)}, \kappa^{y_4 (a, \kappa, \eta/2)}, \, \eta / 2} \, y_4 (a, \kappa, \eta/2).
\]
The situation is symmetric to the one on $P_5$ --- just follow the exact same reasoning.

\smallskip

\noindent $\bullet$ Moving to $P_7 \dfn \set{- \infty < \ell < 0 < r < \infty}$, we shall use a combination of the treatment on $P_3$ and $P_4$. Define
\[
Y_7 \dfn y_3 \pare{a^{y_4 (a, \kappa, \eta/2)}, \kappa^{y_4 (a, \kappa, \eta/2)}, \, \eta / 2} \, y_3 (a, \kappa, \eta/2).
\]
The validity of (Y1), (Y2), (Y3) and (Y4) follow by the same reasoning carried out on the set $P_5$.

\smallskip

\noindent $\bullet$ On $P_8 \dfn \set{\ell = 0, \, r = 0} \subseteq \set{\ngo(0) = 0}$ there is no need to do anything: simply set $Y_8 \dfn 1$.

\smallskip

\noindent $\bullet$ Finally, on $P_9 \dfn \set{\ell = - \infty, \, r = \infty} = \set{\cosupp(\kappa) = \emptyset}$ there is also no need to do anything; set $Y_9 \dfn 1$. Indeed, we either have $c = 0$, which implies that $a =0$ and, therefore, $\ngo(- \infty) = \ngo(+\infty) = 0$, or $c > 0$, in which case $\ngo(- \infty) = \infty$ and $\ngo(+\infty) = -\infty$.

%----------------------------------------------------------------
\bibliographystyle{siam}
\bibliography{na1}
\end{document}